\documentclass[review]{elsarticle}

\usepackage{etex}
\usepackage[utf8]{inputenc}
\usepackage{natbib}
\usepackage{amsmath,amssymb}
\usepackage{latexsym}
\usepackage{booktabs}
\usepackage{ragged2e}
\usepackage{geometry}
\usepackage{stmaryrd}
\usepackage{graphicx}
\usepackage{float}
\usepackage{endnotes}
\usepackage{blkarray}
\setcounter{MaxMatrixCols}{20} 
\usepackage{bbm}
\usepackage{indentfirst}
\usepackage{latexsym}
\usepackage{booktabs}
\usepackage{geometry}
\usepackage{amsthm}
\usepackage{subfig}
\usepackage{color}
\usepackage[export]{adjustbox}
\usepackage{listings}
\lstset{language=Matlab}
\lstset{breaklines}
\lstset{extendedchars=false}

\usepackage{mathtools}
\usepackage[ruled,linesnumbered]{algorithm2e}

\geometry{verbose,letterpaper,tmargin=1in,bmargin=1in,lmargin=1in,rmargin=1in}

\usepackage{epstopdf}
\usepackage{multirow}
\usepackage{paralist}
\usepackage{setspace}
\usepackage{morefloats}
\usepackage[]{algorithm2e}
\usepackage{mdwlist}
\usepackage{booktabs} 
\usepackage{amsfonts}
\newcommand{\be}{\begin{equation}}
\newcommand{\ee}{\end{equation}}

\newtheorem{thm}{Theorem}

\newtheorem{cor}[thm]{Corollary}

\newtheorem{defn}[thm]{Definition}

\newtheorem{prop}[thm]{Proposition}

\newcommand{\RNum}[1]{\uppercase\expandafter{\romannumeral #1\relax}}
\usepackage{hyperref}
\makeatletter
\def\tagform@#1{\maketag@@@{\bfseries(\ignorespaces#1\unskip\@@italiccorr)}}
\renewcommand{\eqref}[1]{\textup{{\normalfont(\ref{#1}}\normalfont)}}
\makeatother
\newcommand{\bigzero}{\mbox{\normalfont\Large\bfseries 0}}

\usepackage{thmtools}
\usepackage{lineno}
\usepackage{geometry}
\usepackage{color}
\usepackage[export]{adjustbox}
\usepackage{listings}
\lstset{language=Matlab}
\lstset{breaklines}
\lstset{extendedchars=false}

\newcommand{\ra}[1]{\overrightarrow{#1}}
\newcommand{\la}[1]{\overleftarrow{#1}}

\usepackage{algorithmic}
\usepackage{tikz}
\usepackage{caption}

\usepackage{blkarray}
\usepackage{amsmath}



\journal{}

\begin{document}

\begin{frontmatter}


\title{Modeling the spread of infectious disease in urban areas with travel contagion}

\author{Xinwu Qian, Satish Ukkusuri}

\address{Lyles School of Civil Engineering, Purdue University}

\begin{abstract}
Urban mass transportation system satisfies the essential mobility needs of the large-scale urban population, but it also creates an ideal environment that favors the spread of infectious diseases, leading to significant risk exposure to the massive urban population. In this study, we develop the mathematical model to understand the coupling between the spreading dynamics of infectious diseases and the mobility dynamics through urban transportation systems. We first describe the mobility dynamics of the urban population as the process of leaving from home, traveling to and from the activity locations, and engaging in activities. We then embed the susceptible-exposed-infectious-recovered (SEIR) process over the mobility dynamics and develops the spatial SEIR model with travel contagion (Trans-SEIR), which explicitly accounts for contagions both during travel and during daily activities. We investigate the theoretical properties of the proposed model and show how activity contagion and travel contagion contribute to the average number of secondary infections. In the numerical experiments, we explore how the urban transportation system may alter the fundamental dynamics of the infectious disease, change the number of secondary infections, promote the synchronization of the disease across the city, and affect the peak of the disease outbreaks. The Trans-SEIR model is further applied to the understand the disease dynamics during the COVID-19 outbreak in New York City, where we show how the activity and travel contagion may be distributed and how effective travel control can be implemented with only limited resources. The Trans-SEIR model along with the findings in our study may have significant contributions to improving our understanding of the coupling between urban transportation and disease dynamics, the development of quarantine and control measures of disease system, and promoting the idea of disease-resilient urban transportation networks. 
\end{abstract}

\begin{keyword}
Network modeling \sep Infectious disease \sep Travel contagion \sep Urban transportation system


\end{keyword}

\end{frontmatter}



\section{Introduction}
As of 2018, there were over 4.2 billion people living in urban areas and the number is estimated to reach 6.3 billion (70\% of the total world population) in 2050~\cite{wordlpopu18}. The rapid growth of the urban population gives rise to the prosperity of the urban economy with much more travel and intensive daily activities. A notable example is the development of urban mass transit systems. As the only affordable mobility solution to the general urban public, it has grown to enormous scale in large cities with buses and metros serving over 53 billion passengers worldwide as of 2017~\cite{urban_trans_2018}. Behind the giant number, however, is an environment that favors the spread of infectious diseases which constitutes a significant health risk to the entire urban community. On the one hand, the urban population is making more trips and traveling a longer distance, with the average annual person-mile traveled in the United States increased by 169\% from 1969 to 2009~\cite{chambers2015passenger}. On the other hand, people are spending more time in the transportation system especially in urban areas. For New York City, the average commuting to work time exceeds 40 minutes for 69\% of the city's neighborhoods, and 59\% of New York commuters use mass transit as the tool for commuting. To meet the growing mobility needs with limited mobility resources, the mass transit systems are developed to carry as many passengers, leading to enclosed compartments with high population density, close proximity, and long duration. This represents sufficient exposure duration and a close-enough distance that allows for the pathogens to migrate among unprotected daily commuters. A direct consequence is the recent outbreaks of the COVID-19 in major cities worldwide~\cite{dong2020interactive} with over 4 million confirmed cases and 270 thousand deaths as of May 2020. Despite the significant risks associated with the urban transportation system and the outrageous consequences caused by numerous infectious diseases, the role played by the urban transportation system during the outbreaks of infectious diseases in urban areas remains largely unexplored. This motivates us to make the initial attempt to model the contagion process during travel and further explore possible mitigation measures that may help to curb the spread of infectious diseases. 

There are two main approaches in the literature for characterizing the dynamics of infectious diseases. The first approach comprises of the compartment model, where the initial SIR model (also known as Kermack-McKendrick model) divides the population into compartments of susceptible, infected, and recovered, and non-linear ordinary differential equations (ODE) are used to model the dynamics among the compartments~\cite{kermack1927contribution}. Based on the SIR model, a variety of models have been developed to account for more realistic disease nature, including models which consider the incubation period~\cite{li1995global}, the vertical transmission~\cite{smith2001global}, the age structure~\cite{castillo1989epidemiological}, and the vaccine strategy~\cite{shulgin1998pulse,d2005pulse}. And extensive efforts were made to understand the property of the model and analyze the local and global stability of the non-linear ODEs, where surveys of related works can be found in~\cite{hethcote1981periodicity,liu1987dynamical}. One major criticism for the compartment model is the oversimplified assumptions, where the population in each compartment is assumed to be fully mixed and therefore each individual has the same behavior. Therefore, the model may fail to represent the complex mobility and contact patterns for many real-world diseases. To address this shortcoming, the second approach models the disease propagation at the individual level over large-scale networks, where each individual is represented by a node and their contact structure is captured by the set of edges~\cite{newman2002spread,meyers2007contact}. The approach was developed based on the method of bond percolation model, and the generation function was used for deriving the important attributes of a certain contact network, including the average degree and excessive degree, which later used to calculate the size of the disease outbreaks. Danon et al.~\cite{danon2011networks} conducted a comprehensive review of the works in this approach. The contact network model helps to capture diverse interactions among individuals with given distribution and contributes to the understanding between disease dynamics and the network topology. However, it is not applicable to understand the spread of infectious disease in densely populated urban areas: it is computationally intractable to construct the individual contact network for all urban populations and it is also impossible to obtain the necessary input for the contact network as evaluating the individual's contact pattern will be very expensive. 

In light of the existing efforts and issues,  to model the spread of infectious diseases for the urban areas, the compartment model is still the ideal choice due to its well-explored mathematical properties and the scalability when it comes to model the mass urban population. However, the simplistic assumption of homogeneous population mixing needs to be corrected as the population dynamics are highly heterogeneous in large cities. In particular, since human mobility is the driving factor for the spread of infectious diseases, it is necessary for the compartment model to account for the mobility dynamics around the city. And the needs of mobility dynamics assert two additional challenges for developing appropriate compartment models to represent the urban disease dynamics. First, the diverse land use and activity patterns in urban areas suggest the necessity to incorporate the spatial-varying mobility patterns and activity patterns into each of the original disease-related compartments, so that the rate of transmission and the amount of population should be differentiated at individual subareas of a city. On the other hand, while the spatial movements bring people to their activity locations, it also results in massive contact and contagion during travel which is negligible given the existence of urban mass transit systems. This together with the spatial heterogeneity require additional consideration to explicitly capture the contact and the disease transmission during travel as separate compartments. To address these challenges, we propose the spatial SEIR model with travel contagion (Trans-SEIR) to characterize the spread of infectious diseases in urban areas. The proposed model overcomes the first challenge by developing an urban mobility model to represent the mobility dynamics of leaving home, traveling to-and-from activity locations and activity engagements. Then the SEIR process can then be overlaid with the mobility model to reflect the spatial heterogeneity of the disease dynamics. In addition, the Trans-SEIR model considers two types of contagion, namely the activity contagion and travel contagion, to explicitly separate the infections due to urban travel activities and the infections that arise from daily activities such as work and entertainment. Based on the Trans-SEIR model, we further explore the implementation of entrance control in the urban transportation system and develop the optimal resource allocation problem which determined the distribution of limited medical resources and manpower over the travel segments in urban areas to minimize the risk from the infectious diseases. To this end, the proposed Trans-SEIR model represents a realistic modeling framework to understand the complete trajectories of disease outbreaks in urban areas, and serves as the essential modeling components for devising optimal control strategies and policies, which have significant implications to both preventative prevention as well as disease mitigation from the urban transportation system perspective.

The rest of the study is organized as follows. In the next section, we briefly introduce the background of epidemic modeling with compartment models and establish the basics for modeling the mobility dynamics. In Section 3, we present the mathematical formulations for the Trans-SEIR model and analyze the theoretical properties of the model, followed by the optimal entrance control model for urban transportation systems in Section 4. In Section 5, we present the numerical experiments of both toy networks and the real-world case study for the COVID-19 outbreak in New York City (NYC). Finally, we conclude our study with major findings and insights in Section 6. 

\section{Modeling preliminaries}

\subsection{Notation}
We summarize the list of variables used in this section as follows:
\begin{table}[htbp!]
	\setstretch{1.2}
	\centering
	\caption{Table of notation}
	\label{notation}
	\begin{tabular}{p{3cm}p{13cm}}
		\hline
		Notation & Description \\ \hline
		&\textbf{Variables}\\ \hline
		$S$ & Susceptible population. \\
		$E$ & Exposed (latent) population.\\
		$I$ & Infected population. \\
		$R$ & The population who recovered from disease and got immunity.\\
		$N_i^p$ & Total amount of visitors who are current present at zone $i$. \\
        $N_i^r$ & Resident population at zone $i$. \\ 
		$N_{ij}$ & The amount of people who are residents of zone $i$ and currently present at zone $j$. \\
		\hline
		&\textbf{Fixed parameters}\\ \hline
        $\beta$ & Contagion rate between $S$ and $I$. \\
        $1/\sigma$ & Length of latent period for population $E$.\\
        $1/\gamma$ & Length of infectious period for population $I$. \\
        $\mu$ & Death and birth rate. \\
        $P$ & Total number of zonees in the area. \\
        $\alpha_i$ & Arrival (departure) rate of external population for zone $i$. \\ 
		$g_i$ & Total departure rate of zone $i$.\\
		$m_{ij}$ & The rate of movement from zone $i$ to zone $j$, where $\sum_j m_{ij}=1$. \\
        $r_{ij}$ & The rate of return from zone $j$ to zone $i$\\
        $d^{M}$ & Control rate of travel mode $M$. \\
        \hline
	\end{tabular}
\end{table}

\subsection{Epidemic modeling}
The well-known compartment model for capturing the dynamics of infectious diseases was proposed by W.O.Kermack and A.G.McKendrick~\cite{kermack1927contribution}, where they consider that the population may experience three states over time: 
\begin{itemize}
\item Susceptible class or $S(t)$ is used to represent the number of individuals not yet infected with the disease at time t, or those susceptible to the disease.
\item Infected class or $I(t)$ denotes the number of individuals who have been infected with the disease and are capable of spreading the disease to those in the susceptible category.
\item Removed class or $R(t)$ is the compartment used for those individuals who have been infected and then removed from the disease, either due to immunization or due to death. Those in this category are not able to be infected again or to transmit the infection to others.
\end{itemize}

And the compartment model has several key assumptions: (1) each individual in the population has an equal probability of contracting the disease with a rate of $\beta$, (2) the population leaving the susceptible class is equal to the number of people entering the infected class, (3) people recovered from the disease with a mean recovery of $1/\gamma$ gain permanent immunity to the disease, and (4) the death rate is the same as the birth rate so that the total population is fixed. 

In reality, people who are infected by certain diseases may not present any symptoms until the end of the incubation period, and it is important to take this latent period into consideration for more accurate representation of disease dynamics. Consequently, the SEIR model was introduced with an additional compartment which is known as the latent class (E(t)) \cite{lloyd1996spatial}. The population of E are considered as exposed but not infectious, and will proceed into the infectious state with an average length of latent period of $\frac{1}{\sigma}$. Such disease dynamics can be mathematically represented as:
\begin{eqnarray}
&&\frac{dS}{dt}=-\beta SI+\mu (N-S) \\ \nonumber
\\ 
&&\frac{dE}{dt}=\beta SI-\mu E-\sigma E\\ \nonumber
\\ 
&&\frac{dI}{dt}=\sigma E-\gamma I-\mu I\\ \nonumber
\\
&&\frac{dR}{dt}=\gamma I-\mu R 
\end{eqnarray}

\subsection{Mobility model}
\label{sec:mobility}
The spread of infectious disease is closely interacted with the mobility pattern of urban population. Before discussing the mathematical model for capturing disease dynamics, we first present the mobility model that is followed by urban population. 

The mobility model used in this study is adapted from the intra-city mobility model proposed by Sattenspiel and Dietz~\cite{sattenspiel1995structured}. Different from the previous study, here we explicitly captures the mobility dynamics of population during travel which is especially important for modeling the mobility dynamics in densely populated urban areas. Specifically, we divide the urban area into a collection of $P$ zones and the population are further classified into two groups: residents and visitors. Residents of zone $i$ will be in one of the four states: staying at the resident location, travelling to the activity location, engaging in activities, and travelling back to the home location. Let $\ra{N}_{ij}$ be the population who are residents of zone $i$ and travel from $i$ to $j$, $N_{ij}$ be the population of residents of zone $i$ who are currently at zone $j$, $\la{N}_{ij}$ be the population who travel from $j$ back to $i$ and $N_i$ be the population who remain at the resident location at $i$. We have the following two equations to capture the total residents and visitors for a given zone:
\begin{equation}
N_i^r=\sum_{j=1}^{P} N_{ij}+{N}_{\ra{ij}}+{N}_{\la{ij}},\forall\,i
\label{eq:res}
\end{equation}
\begin{equation}
N_i^p=\sum_{j=1}^P N_{ji},\forall\,i
\label{eq:vis}
\end{equation} 
Equation~\ref{eq:res} suggests that the amount of residents at zone $i$ can be calculated as the summation of population who are residents of zone $i$ and currently at zone $j$. Similarly, equation~\ref{eq:vis} states that visitor population at zone $i$ consist of the residents of zone $i$ who remain in $i$, as well as the population who reach zone $i$ from other zones.

For people in different groups, we assume a constant rate of death $\mu$ and an equal rate of birth. New births of zone $i$ are assumed to join the residents of the same zone ($N_i$). For residents at zone $i$, a constant departure rate is assumed ($g_i$) and the departed population may visit zone $j$ for various daily activities (e.g., work or entertainment) with the split ratio of $m_{ij}$ and $\sum_j m_{ij}=1$. The departed population then join the population in travel and will arrive at the activity location with a rate of $\alpha_{\ra{ij}}$. The population at activity location will return to their home at the rate of $r_{ij}$, who again join the population in travel and eventually arrive at home with the rate of $\alpha_{\la{ij}}$. With the above description, we have the following system of ordinary differential equations (ODE) for computing the mobility dynamics within a urban area:
\begin{equation}
\frac{d{N}_{\ra{ij}}}{dt}=g_im_{ij}N_{i}-\alpha_{\ra{ij}} {N}_{\ra{ij}}-\mu{N}_{\ra{ij}}  
\label{eq:bt1}
\end{equation}
\begin{equation}
\frac{d{N}_{\la{ij}}}{dt}=r_{ij}N_{ij}-\alpha_{\la{ij}} {N}_{\la{ij}}-\mu{N}_{\la{ij}}  
\label{eq:bt2}
\end{equation}

\begin{equation}
\frac{dN_{ij}}{dt}=-r_{ij}N_{ij}+\alpha_{\ra{ij}} {N}_{\ra{ij}} -\mu N_{ij}
\label{eq:bt3}
\end{equation}
\begin{equation}
\frac{dN_{i}}{dt}=\sum_{j=1}^P \alpha_{\la{ij}} {N}_{\la{ij}}-g_iN_{i} -\mu N_i+\mu N_i^r
\label{eq:bt4}
\end{equation}

With the above system of equations, we can calculate the rates of change for resident and visitor population as:
\begin{equation}
\begin{aligned}
\frac{{N}_i^r}{dt}&=\sum_{j=1}^P \frac{{N}_{ij}}{dt}+\frac{{N}_{\la{ij}}}{dt}+\frac{{N}_{\ra{ij}}}{dt}
\end{aligned}
\label{eq:res}
\end{equation} 

\begin{equation}
\begin{aligned}
\frac{{N}_i^p}{dt} &=\sum_{j=1}^P \frac{{N}_{ji}}{dt}
\end{aligned}
\label{eq:vis}
\end{equation}
The described system of equations will reach the equilibrium if equations~\ref{eq:bt1} to~\ref{eq:bt4} equal zero, e.g., there is no change of population for each compartment. Consequently, for the equilibrium solution of the mobility dynamics, we arrive at the following proposition:
\begin{prop}
	The system described by equations~\ref{eq:bt1} to~\ref{eq:bt4} is linear and has a unique equilibrium solution, and the solution is globally asymptotically stable. In particular, at equilibrium, we have
	\begin{equation}
	N_{i}^*+\sum_{j=1}^P K_{ij}N_{ij}^*=N_i^r
    \label{eq:equilibrium_ii}
	\end{equation}
	\begin{equation}
	N_{i}^*=\frac{1}{1+\sum_{j=1}^P K_{ij}}N_i^r
    \label{eq:equilibrium_ij}
	\end{equation}
    where $K_{ij}=g_im_{ij}(\frac{1}{r_{ij}}+\frac{1}{\alpha_{\la{ij}}}+\frac{1}{\alpha_{\ra{ij}}})$.
\end{prop}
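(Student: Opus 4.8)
The plan is to exploit the block structure of the system and the fact that it is a linear compartmental model, so that stability reduces to a matrix-sign condition. First I would observe that, for each fixed resident zone $i$, equations~\eqref{eq:bt1}--\eqref{eq:bt4} involve only the variables $N_i$ and $\{N_{\ra{ij}}, N_{ij}, N_{\la{ij}}\}_{j=1}^P$ together with the parameter $N_i^r$; no variable belonging to another resident zone appears. Hence the full system decouples into $P$ independent affine-linear blocks $\dot{x}_i = A_i x_i + b_i$, where $x_i$ collects the $3P+1$ compartments of residents of $i$ and $b_i=\mu N_i^r e_{N_i}$ is a constant forcing term in the $N_i$ coordinate. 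Summing the four equations over the block (using $\sum_j m_{ij}=1$), I would show that the total resident count $T_i:=N_i+\sum_j(N_{\ra{ij}}+N_{ij}+N_{\la{ij}})$ obeys $\dot T_i=\mu(N_i^r-T_i)$, so $N_i^r$ is a genuine constant parameter, the block is indeed linear, and $T_i\to N_i^r$.

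Next I would compute the equilibrium directly. Setting the right-hand sides of~\eqref{eq:bt1}--\eqref{eq:bt3} to zero yields a cascade of proportionalities expressing $N_{\ra{ij}}^*$, $N_{ij}^*$ and $N_{\la{ij}}^*$ as explicit multiples of $N_i^*$, whose products telescope so that $N_{\ra{ij}}^*+N_{ij}^*+N_{\la{ij}}^*=K_{ij}N_i^*$ once the demographic rate $\mu$ is taken negligible against the mobility rates (the short-timescale limit), with $K_{ij}=g_i m_{ij}\bigl(\tfrac{1}{r_{ij}}+\tfrac{1}{\alpha_{\la{ij}}}+\tfrac{1}{\alpha_{\ra{ij}}}\bigr)$. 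Substituting this into the conservation identity $T_i^*=N_i^r$ gives the aggregate relation~\eqref{eq:equilibrium_ii}, and solving for $N_i^*$ produces~\eqref{eq:equilibrium_ij}. Since the cascade determines the remaining compartments uniquely once $N_i^*$ is fixed, the equilibrium is unique.

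The substantive step is global asymptotic stability, which I would obtain by showing that each block matrix $A_i$ is Hurwitz. The key structural observations are that $A_i$ is a Metzler matrix (every off-diagonal entry -- $g_i m_{ij}$, $\alpha_{\ra{ij}}$, $r_{ij}$, $\alpha_{\la{ij}}$ -- is nonnegative) and that every column sum of $A_i$ equals $-\mu$, the latter following from the same cancellations used in the conservation computation together with $\sum_j m_{ij}=1$. Consequently $-A_i$ has positive diagonal, nonpositive off-diagonals, and is strictly column diagonally dominant (each column's off-diagonal absolute sum falls short of the diagonal by exactly $\mu>0$), so $-A_i$ is a nonsingular M-matrix; applying Gershgorin columnwise places its spectrum in the open right half-plane, hence $A_i$ is Hurwitz. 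A Hurwitz $A_i$ is invertible, which re-establishes uniqueness of $x_i^*=-A_i^{-1}b_i$, and guarantees that every trajectory of $\dot x_i=A_i x_i+b_i$ converges to $x_i^*$, i.e.\ global asymptotic stability.

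The main obstacle -- really the only non-routine point -- is the Hurwitz verification, and everything hinges on correctly reading off the Metzler structure and the uniform column sum $-\mu$; once those are in hand the M-matrix conclusion is standard. A secondary subtlety I would be careful about is the role of $\mu$: strict stability needs $\mu>0$, whereas the clean form of $K_{ij}$ in the statement is the $\mu\to 0$ reduction, so I would make the two regimes explicit rather than let them collide. If one insists on $\mu=0$, then $A_i$ acquires a simple zero eigenvalue corresponding to conservation of $T_i$, and asymptotic stability instead holds on each invariant affine slice $\{T_i=N_i^r\}$, which is all the proposition requires.
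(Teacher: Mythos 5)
Your proof is correct and follows essentially the same route as the paper: compute the equilibrium by a cascade of proportionalities obtained from setting \eqref{eq:bt1}--\eqref{eq:bt3} to zero, close the system with conservation of $N_i^r$, and deduce global asymptotic stability from the spectrum of the linear system matrix. You in fact do better than the paper on the two delicate points: where the paper merely asserts that the matrix ``has all negative real eigenvalues,'' you supply the actual justification (Metzler structure, every column sum equal to $-\mu$, hence a nonsingular M-matrix via columnwise Gershgorin), and you correctly flag that the stated $K_{ij}=g_im_{ij}\bigl(\tfrac{1}{r_{ij}}+\tfrac{1}{\alpha_{\la{ij}}}+\tfrac{1}{\alpha_{\ra{ij}}}\bigr)$ is the $\mu\to 0$ reduction of the $\mu$-dependent coefficients that the equilibrium computation actually yields, a discrepancy the paper's own proof glosses over.
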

\begin{proof}
	Since the system is linear, and the matrix constitutes a strongly connected graph. The equilibrium solution of the system can be calculated by setting equation~\ref{eq:bt1} to~\ref{eq:vis} to zero. 
    We have:
    \begin{equation}
    {N}_{\ra{ij}}^*=\frac{g_im_{ij}}{\alpha_{\ra{ij}}+\mu}N_{ii}^*, N_{ij}^*=\frac{g_im_{ij}\alpha_{\ra{ij}}}{(r_{ij}-\mu)(\alpha_{\ra{ij}}+\mu)}N_{ii}^*, ,{N}_{\la{ij}}^*=\frac{g_im_{ij}\alpha_{\ra{ij}}r_{ij}}{(r_{ij}-\mu)(\alpha_{\ra{ij}}+\mu)(\alpha_{\la{ij}}+\mu)}N_{ii}^*
    \label{eq:ijii}
    \end{equation}
    Let $K_{ij}=\frac{g_im_{ij}}{\alpha_{\ra{ij}}+\mu}+\frac{g_im_{ij}\alpha_{\ra{ij}}}{(r_{ij}-\mu)(\alpha_{\ra{ij}}+\mu)}+\frac{g_im_{ij}\alpha_{\ra{ij}}r_{ij}}{(r_{ij}-\mu)(\alpha_{\ra{ij}}+\mu)(\alpha_{\la{ij}}+\mu)}$. Based on equation~\ref{eq:res} and the fact that $N_i^r$ is fixed, we have:
    \begin{equation}
    N_{ii}^*+\sum_{j=1,j\neq i}^P K_{ij}N_{ii}^*=N_i^r
    \end{equation}
    We therefore have
    \begin{equation}
    N_{ii}^*=\frac{1}{1+\sum_{j=1,j\neq i}^P K_{ij}}N_i^r
    \end{equation}
	which gives equation~\ref{eq:equilibrium_ij}. 
	
	To prove that the equilibrium solution is G.A.S, one can write the whole matrix $M$ for system $dN/dt=MN$. Since the urban areas can be viewed as a connected network, it can be easily shown that diagonal entries in $M$ are strictly negative (either $-\mu$ or $-g_i+\mu$) and the matrix $M$ has all negative real eigenvalues, which implies that the equilibrium point is G.A.S. 
\end{proof}

In the following sections, we will write $N_{ij}$ to denote equilibrium population flow $N_{ij}^*$ for notation simplicity.

\section{Modeling disease spreading with travel contagion}
\subsection{Notation}
We summarize the list of variables used in this section as follows
\begin{table}[htbp!]
	\setstretch{1.2}
	\centering
	\caption{Table of notation}
	\label{notation}
	\begin{tabular}{p{2cm}p{13.5cm}}
		\hline
		Notation & Description \\ \hline
		&\textbf{Variables}\\ \hline
		$S_{ij}$ & Susceptible population who are residents of zone $i$ and currently in zone $j$. \\
		$E_{ij}$ & Exposed (latent) population who are residents of zone $i$ and currently in zone $j$.\\
		$I_{ij}$ & Infected population who are residents of zone $i$ and currently in zone $j$. \\
		$R_{ij}$ & Recovered population who are residents of zone $i$ and currently in zone $j$.\\
		$S_{\ra{ij}},S_{\la{ij}}$ & Susceptible population in travel from $i$ to $j$ and from $j$ to $i$ respectively. Similar notations are used for $E,I$ and $R$ population.\\
		$N_i^p$ & Population at zone $i$. \\
		$N_{ij}$ & The amount of people currently at zone $j$ who are the residents of zone $i$. \\
		\hline
		&\textbf{Fixed parameters}\\ \hline
		$D$ & Total number of travel modes available. \\
		$d$ & Travel mode $d$, where $d=1,2,...,D$. \\
		$\beta^A$ & Disease transmission rate per valid contact at the activity location. \\
		$\beta_{d}^T$ & Disease transmission rate per valid contact during travel using mode $d$. \\ 
		$c_{ij}^d$ & The ratio of people who choose travel mode $d$ between zone $i$ and zone $j$. \\
		$1/\sigma$ & The expected latent duration remaining in $E$ before moving to $I$. \\ 
		$1/\gamma$ & The expected recover duration remaining in $I$ before moving to $R$. \\
		$\kappa_{ij}$ & Expected number of valid contacts for residents of $i$ who are currently at zone $j$. \\ 
		$\kappa_{ij,kl}^d$ & Expected number of valid contacts for travelers from $i$ to $j$ who come across with travelers from $k$ to $l$ using the same travel mode $d$. \\ 
		
		\hline
	\end{tabular}
\end{table}

\subsection{System dynamics}
Important components missing from the previously discussed SEIR model are the spatial movement of urban population and the transmission of infectious diseases due to contacts during travel. The model only considers local dynamics, but it is essential for urban areas to model explicitly how population flow moving around the city. In particular, these flows are driven by various activities, such as work, school, or entertainment. And people get in contact with others by taking different activities through various transportation tools. As long as some individuals are infected, their activities and the urban transportation mobility will take the disease to every corner of the city. This motivates us to understand the the spread of infectious disease in urban area by modeling the system with the following 6 dynamics: 
\begin{enumerate}
	\item The mobility dynamics of urban population follows the model as discussed in Section~\ref{sec:mobility}.
	\item Similar to the SEIR model, we consider $S_{ij}$ being affected by $I_{ij}$ with the inner-zone activity contagion rate $\beta^{A}$. 
	\item In addition to the spread of diseases due to activity engagement, we also consider people get infected during travel. That is, $S_{ij}$ may be infected by contacting with $I_{ij}$ if they use the same travel mode $d$, with the contagion rate of $\beta^T_d$. 
	\item Once people in $S_{ij}$ are infected, they become $E_{ij}$. They are not infectious until the end of the latent period, and the length of latent period is characterized by $1/\sigma$.
	\item People in $E_{ij}$ become $I_{ij}$ at the end of the latent period. And the length of the infectious period is characterized by $1/\gamma$. 
	\item At the end of their infectious period, people in $I_{ij}$ become $R_{ij}$. For simplicity, we consider that they gain permanent immunity to the disease (e.g., death, vaccinated, or fully treated) and will no longer be infected.  
\end{enumerate}

In particular, the contagion process between susceptible and infected population can be illustrated by the example of a 3-zone network as shown in Figure~\ref{fig:examp}
\begin{figure}[h!]
	\centering
	\includegraphics[width=150mm]{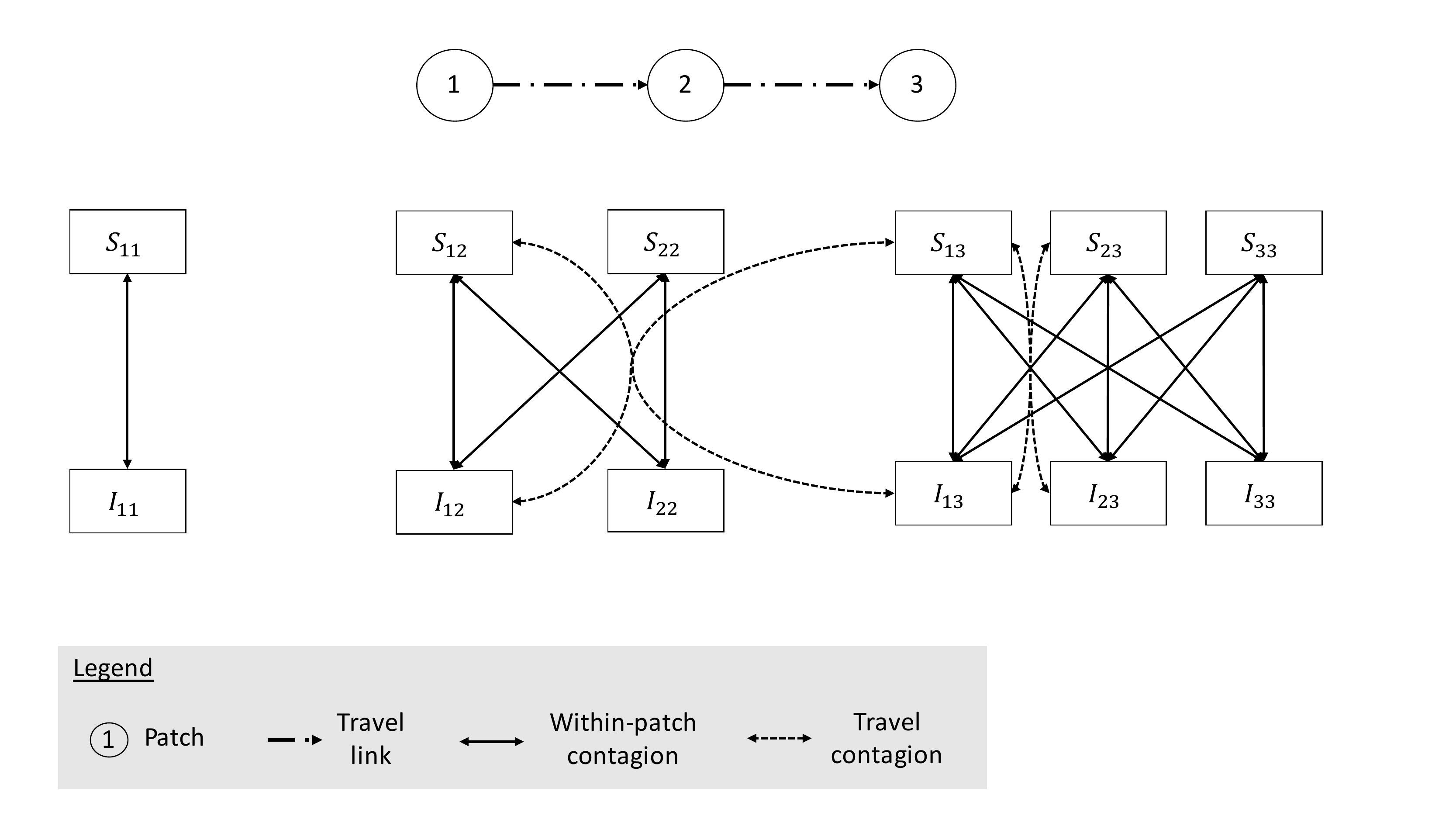}
	\caption{Illustration of the disease contagion process in a 3-zone network}
	\label{fig:examp}
\end{figure}
For the illustration network, there are two travel links 1-2 and 2-3 and three possible routes: from zone 1 to zone 2, from zone 1 to zone 3, and from zone 2 to zone 3. And we have two different contagions in this network. One is the inner-zone contagion which takes place between $S$ and $I$ population in the same zone (1-3). Additionally, there is travel contagion (shown in dashed line), which may happen if two population share overlay segments in their travel routes. Consequently, we have two possible travel contagions: between people travel from 1 to 2 and 1 to 3, between people travel from 1 to 3 and 2 to 3, and vice versa. While this is only an illustration, the overlay segments should also be distinguished depending on the particular mode chosen by the commuters. 

\subsection{Formulation}
With the above discussions on the system dynamics, we first differentiate the two contagion dynamics: activity contagion and travel contagion, and can be written mathematically as:
\begin{equation}
f_{ij}(S,I)=\beta^{A}_j\kappa_{ij}\frac{S_{ij}(I_j+\sum_{k=1}^{N}I_{kj})}{N_j^p}
\end{equation}
\begin{equation}
f_{i}(S,I)=\beta^{A}_i\kappa_{i}\frac{S_{i}(I_i+\sum_{k=1}^{N}I_{ki})}{N_i^p}
\end{equation}
\begin{equation}
h_{\la{ij}}(S,I)=\sum_{d=1}^D c_{ij}^{d} \beta_d^T {S}_{\la{ij}} [\sum_{k=1}^P\sum_{l=1}^P \frac{\kappa_{\la{kl},\la{ij}}^d I_{\la{kl}}}{N_{\la{kl}}}+\frac{\kappa_{\ra{kl},\la{ij}}^d I_{\ra{kl}}}{N_{\ra{kl}}}]
\label{eq:travel_return_infect}
\end{equation}
\begin{equation}
h_{\ra{ij}}(S,I)=\sum_{d=1}^D c_{ij}^{d} \beta_d^T {S}_{\ra{ij}} [\sum_{k=1}^P\sum_{l=1}^P \frac{\kappa_{\la{kl},\ra{ij}}^d I_{\la{kl}}}{N_{\la{kl}}}+\frac{\kappa_{\ra{kl},\ra{ij}}^d I_{\ra{kl}}}{N_{\ra{kl}}}]
\label{eq:travel_out_infect}
\end{equation}

Equation~\ref{eq:travel_return_infect} summarizes the S population who returns to home $i$ and get infected during travel, by encountering I population to-and-from resident location in other places. Specifically, $c_{ij}^dg_im_{ij}S_{ii}$ captures the ratio of $S_{ii}$ who leaves for destination $j$ using transportation mode $d$. $\frac{\kappa_{kl,ij}^d I_{kl}}{N_{kl}}$ denotes the contact rate with infectious population returning from $l$ to $k$, which is simplified from:
\begin{equation}
\frac{\kappa_{kl,ij}^d c_{kl}^dr_{kl} I_{kl}}{c_{kl}^dr_{kl} N_{kl}}
\end{equation}
and $\kappa_{kl,ij}^d$ refers to the number of contacts on average between ODs $kl$ and $ij$, and is a function of the travel time. Similarly, $\frac{\kappa_{kl,ij}^d I_{kk}}{N_{kk}}$ represents the contact rate with the infectious population leaving from $k$ to $l$, which is simplified from:
\begin{equation}
\frac{\kappa_{kl,ij}^d c_{kl}^dg_km_{kl}I_{kk}}{c_{kl}^dg_km_{kl}N_{kk}}
\end{equation}
With the above four different incidences for possible disease transmission, we can formally express the mathematical formulations that describe the transportation disease spreading process in urban area as follows:
\begin{eqnarray}
\begin{aligned}
\frac{E_i}{dt}=&-g_{i}E_{i}+\sum_{j=1}^{P} \alpha_{\la{ij}}E_{\la{ij}}-\sigma E_{i}+f_{i}(S,I)-\mu E_{i}\\
\frac{E_{ij}}{dt}=&-r_{ij}E_{ij}+\alpha_{\ra{ij}}E_{\ra{ij}}-\sigma E_{ij}+f_{ij}(S,I)-\mu E_{ij}\\
\frac{{E}_{\ra{ij}}}{dt}=&g_{i}m_{ij}E_{i}+h_{\ra{ij}}(S,I)-\alpha_{\ra{ij}}E_{\ra{ij}}-\sigma E_{\ra{ij}}-\mu E_{\ra{ij}}\\
\frac{{E}_{\la{ij}}}{dt}=&r_{ij}E_{ij}+h_{\la{ij}}(S,I)-\alpha_{\la{ij}}E_{\la{ij}}-\sigma E_{\la{ij}}-\mu E_{\la{ij}}\\
\frac{I_i}{dt}=&-g_{i}I_{i}+\sum_{j=1}^{P}\alpha_{\la{ij}} I_{\la{ij}}+\sigma E_{i}-\gamma I_{i}-\mu I_{i}\\
\frac{I_{ij}}{dt}=&-r_{ij}I_{ij}+\alpha_{\ra{ij}}I_{\ra{ij}}+\sigma E_{ij}-\gamma I_{ij}-\mu I_{ij}\\
\frac{{I}_{\ra{ij}}}{dt}=&g_{i}m_{ij}I_{i}-\alpha_{\ra{ij}}I_{\ra{ij}}+\sigma E_{\ra{ij}}-\gamma I_{\ra{ij}}-\mu I_{\ra{ij}}\\
\frac{{I}_{\la{ij}}}{dt}=&r_{ij}I_{ij}-\alpha_{\la{ij}}I_{\la{ij}}+\sigma E_{\la{ij}}-\gamma I_{\la{ij}}-\mu I_{\la{ij}}\\
\frac{S_i}{dt}=&-g_{i}S_{i}+\sum_{j=1}^{P}\alpha_{\la{ij}} S_{\la{ij}}-f_{i}(S,I)+\mu(N_i^r-S_{i})\\
\frac{S_{ij}}{dt}=&-r_{ij}S_{ij}+\alpha_{\ra{ij}}S_{\ra{ij}}-f_{ij}(S,I)-\mu S_{ij}\\
\frac{{S}_{\ra{ij}}}{dt}=&g_{i}m_{ij}S_{i}-h_{\ra{ij}}(S,I)-\alpha_{\ra{ij}}S_{\ra{ij}}-\mu S_{\ra{ij}}\\
\frac{{S}_{\la{ij}}}{dt}=&r_{ij}S_{ij}-h_{\la{ij}}(S,I)-\alpha_{\la{ij}}S_{\la{ij}}-\mu S_{\la{ij}}\\
\frac{R_i}{dt}=&-g_{i}R_{i}+\sum_{j=1}^{P}\alpha_{\la{ij}} R_{\la{ij}}+\gamma I_{i}-\mu R_{i}\\
\frac{R_{ij}}{dt}=&-r_{ij}R_{ij}+\alpha_{\ra{ij}}R_{\ra{ij}}+\gamma I_{ij}-\mu R_{ij}\\
\frac{{R}_{\ra{ij}}}{dt}=&g_{i}m_{ij}R_{i}-\alpha_{\ra{ij}}R_{\ra{ij}}+\gamma I_{\ra{ij}}-\mu R_{\ra{ij}}\\
\frac{{R}_{\la{ij}}}{dt}=&r_{ij}R_{ij}-\alpha_{\la{ij}}R_{\la{ij}}+\gamma I_{\la{ij}}-\mu R_{\la{ij}}\\
\end{aligned}
\label{eq:seir_system}
\end{eqnarray}

The system described by equation~\ref{eq:seir_system} can be easily understood by decomposing each equation into the part of mobility dynamics and the part of disease dynamics. For the mobility dynamics, the system of equations is consistent with the mobility model which captures the movement of $S,E,I,R$ population across the city as well as the death and birth (only into susceptible population). As for the disease dynamics, it involves the migration of population between compartments following the contagion process, the end of latent period and the end of infectious period.

\begin{prop}
	Based on equation~\ref{eq:seir_system}, we can decompose the system into $\mathcal{F}-\mathcal{V}$ where we have
	\begin{equation}
	\mathcal{F}=
	\begin{bmatrix}
	f_{i}\\
	f_{ij}\\
	h_{\ra{ij}}\\
	h_{\la{ij}}\\
	0\\
	0\\
	0\\
	0
	\end{bmatrix}
	, \mathcal{V}=
	\begin{bmatrix}
	(g_{i}+\mu+\sigma)E_{i}-\sum_{j=1}^{P} \alpha_{\la{ij}}E_{\la{ij}}\\
	(r_{ij}+\mu+\sigma)E_{ij}-\alpha_{\ra{ij}}E_{\ra{ij}} \\
	(\mu+\sigma+\alpha_{\ra{ij}})E_{\ra{ij}}-g_{i}m_{ij}E_{i}\\
	(\mu+\sigma+\alpha_{\la{ij}})E_{\la{ij}}-r_{ij}E_{ij}\\
	(g_{i}+\mu+\gamma)I_{i}-\sum_{j=1}^{P} I_{\la{ij}}-\sigma E_{i}\\
	(r_{ij}+\mu+\gamma)I_{ij}-\alpha_{\ra{ij}}I_{\ra{ij}}-\sigma E_{ij}\\
	(\mu+\gamma+\alpha_{\ra{ij}})I_{\ra{ij}}-g_{i}m_{ij}I_{i}-\sigma E_{\ra{ij}}\\
	(\mu+\gamma+\alpha_{\la{ij}})I_{\la{ij}}-r_{ij}I_{ij}-\sigma E_{\la{ij}}\\
	\end{bmatrix}
	\end{equation}
\end{prop}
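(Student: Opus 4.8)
The plan is to recognize this statement as an instance of the next-generation decomposition of van den Driessche and Watmough, applied to the infected subsystem of \eqref{eq:seir_system}. First I would isolate the eight infected coordinates $(E_i, E_{ij}, E_{\ra{ij}}, E_{\la{ij}}, I_i, I_{ij}, I_{\ra{ij}}, I_{\la{ij}})$ and set aside the $S$ and $R$ equations, since the decomposition concerns only states that carry infection. For this subsystem the assertion is purely that the right-hand sides split as $\mathcal{F}-\mathcal{V}$ with the two stated vectors, so the argument is a term-by-term verification rather than an analytic one: there is no stability or spectral claim to discharge here, only the algebraic identity $\dot{x}=\mathcal{F}-\mathcal{V}$.

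The conceptual core is the classification rule. In this framework $\mathcal{F}$ collects only the rate of appearance of genuinely new infections, while $\mathcal{V}$ collects every other flow (latency progression, recovery, mortality, and all mobility transfers), with outflows recorded as positive coefficients and inflows as negative ones. The key observation is that a susceptible individual can become newly infected only through an $S\to E$ transition, and these transitions are exactly the contagion functions $f_i, f_{ij}, h_{\ra{ij}}, h_{\la{ij}}$, which appear solely in the four $E$-equations. Consequently $\mathcal{F}$ carries these four terms in its first four entries and is identically zero in the four $I$-entries. In particular, the transitions $\sigma E\to I$ must be read as transfer within the infected class, not as new infection, so they belong to $\mathcal{V}$; this is the single point where a naive reading could misassign a term, and it is the main obstacle, which is conceptual rather than computational.

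Having fixed the classification, I would then verify each of the eight rows. For every $E$-coordinate I subtract the contagion term and rewrite the remaining linear part as $-(\text{outflow coefficient})\cdot(\text{state})+(\text{inflow terms})$; moving these into $\mathcal{V}$ with flipped sign reproduces, for example, $\mathcal{V}_{E_i}=(g_i+\mu+\sigma)E_i-\sum_{j} \alpha_{\la{ij}}E_{\la{ij}}$, where the outflow coefficient is the sum of the relevant mobility rate, the mortality $\mu$, and the latency rate $\sigma$; the three remaining $E$-rows ($E_{ij},E_{\ra{ij}},E_{\la{ij}}$) follow identically. For every $I$-coordinate the same bookkeeping applies with $\sigma$ replaced by the recovery rate $\gamma$, together with the extra inflow $\sigma E_{\bullet}$ recorded as a negative entry of $\mathcal{V}$. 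As a consistency check I would confirm that each inflow in $\mathcal{V}$ matches the corresponding outflow of the neighboring infected compartment (the $\alpha$, $r$, and $g m$ rates), so the mobility part is conservative; the one place to watch is the $I_i$ row, where the return inflow should read $\sum_{j}\alpha_{\la{ij}} I_{\la{ij}}$ to mirror the $E_i$ row. Completing these eight checks establishes that $\mathcal{F}-\mathcal{V}$ equals the infected subsystem of \eqref{eq:seir_system}, which is the assertion.
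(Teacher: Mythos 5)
Your proposal is correct and matches the paper's (implicit) treatment: the proposition is offered without proof precisely because it is a direct term-by-term reading of the infected subsystem of \eqref{eq:seir_system} under the van den Driessche--Watmough classification, with the contagion terms $f_i, f_{ij}, h_{\ra{ij}}, h_{\la{ij}}$ as new infections and all latency, recovery, mortality, and mobility flows assigned to $\mathcal{V}$, exactly as you verify. Your side observation is also right that the $I_i$ row of $\mathcal{V}$ should read $\sum_{j=1}^{P}\alpha_{\la{ij}} I_{\la{ij}}$ rather than $\sum_{j=1}^{P} I_{\la{ij}}$ to match the $\dot{I}_i$ equation; the missing coefficient in the stated $\mathcal{V}$ is a typo.
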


Considering that an infectious disease has been introduced into a city and follows the system described in equation~\ref{eq:seir_system}, an essential question to be answered is that if the disease will eventually invade the population. Specifically, the above system compromises two equilibrium points as following: 
\begin{defn}
	The disease dynamic system characterized by equation 5.1-5.8 has two equilibrium points. The first equilibrium point $x_0$ is the disease free equilibrium (DFE):
	\begin{equation}
	S_{ij}=N_{ij},E_{ij}=I_{ij}=R_{ij}=0
	\end{equation}
	\begin{equation}
	x_0=(N,0,0,0)
	\end{equation}
	The second equilibrium point is the endemic equilibrium with:
	\begin{equation}
	\sum_{ij} I_{ij}>0
	\end{equation}
	such that a strictly positive fraction population will be in the infectious state.  
\end{defn}

To address the previous question, it requires the understanding of the stability of disease free equilibrium (DFE). In particular, if the DFE is stable, then the disease will be absent from the population, otherwise it is always possible for disease outbreak. We next discuss the stability of the DFE with the presented Trans-SEIR model.

\subsection{Model analysis}
To better analyze the model properties, we first rearrange the modeling parameters in the following order:
\[
E_{1},E_{11}, E_{\ra{11}},E_{\la{11}}, E_{12}, E_{\ra{12}},E_{\la{12}},\dots,E_{\la{nn}},I_{1},I_{11} I_{\ra{11}},I_{\la{11}},\dots
\]
with $E_{\la{ii}},I_{\la{ii}},E_{\ra{ii}},I_{\ra{ii}}$ being strictly zero as we do not consider within zone travels. We denote $Z_i^E$ and $Z_i^I$ as the set of $E$ and $I$ population compartments that are associated with residents of zone $i$, e.g., $Z_i^E=\{E_{11}, E_{\ra{11}},E_{\la{11}}, E_{12},...,E_{\la{nn}}\}$. 

If we linearize the system at the DFE point $x_0$, we have $D\mathcal{F}(x_0)$ as:
\begin{equation}
D\mathcal{F}(x_0)=\begin{bmatrix}
0 & 0 & \frac{df_{ii}(x_0)}{d I_{ii}}+\frac{dh_{ii}(x_0)}{d I_{ii}} & \frac{df_{ii}(x_0)}{d I_{ij}}+\frac{dh_{ii}(x_0)}{d I_{ij}} \\
0 & 0 &\frac{df_{ij}(x_0)}{d I_{ii}}+\frac{dh_{ij}(x_0)}{d I_{ii}} & \frac{df_{ij}(x_0)}{d I_{ij}}+\frac{dh_{ij}(x_0)}{d I_{ij}} \\
0 & 0 & 0 & 0\\
0 & 0 & 0 & 0 \\
\end{bmatrix}
\end{equation}

We can further partition matrix $V$ as in the following format:
\begin{equation}
V=\begin{bmatrix}
V_E & \bigzero \\
-\Sigma & V_I\\
\end{bmatrix}=\left[\begin{array}{@{}c|c@{}}
\begin{matrix}
V_{E}(1) &  &  \\
& \ddots & \\
& & V_{E}(n)
\end{matrix}
& \bigzero \\
\hline
-\Sigma &
\begin{matrix}
\begin{matrix}
V_{I}(1) &  &  \\
& \ddots & \\
& & V_{I}(n)
\end{matrix}
\end{matrix}
\end{array}\right]
\end{equation}

where $\Sigma$ represents the diagonal matrix of $\sigma$. We use $V_E(i)$, with dimension $3n+1\times3n+1$, to represent the block Jacobian matrix with respect to elements the corresponding population compartments in $\{i1,i2,...,in\}$. We can formally express each block of the corresponding matrix as:

\begin{equation}
V_E(i)=
\tiny
\begin{blockarray}{ccccccccccc}
\small
E_{i1} & E_{\ra{i1}} &  E_{\la{i1}} & \dots & E_{i} & \dots & E_{in} & E_{\ra{in}} &  E_{\la{in}} \\
& & & & & & & & & &\\
\begin{block}{[cccccccccc]l}
r_{i1}+\mu+\sigma &-\alpha_{\ra{i1}}& & & & & & & & & E_{i1}\\
&  \alpha_{\ra{i1}}+\mu+\sigma & &  & -g_im_{i1} & & & & & & E_{\ra{i1}}\\ 
-r_{i1} & &  \alpha_{\la{i1}}+\mu+\sigma& & \vdots & & & & & & E_{\la{i1}} \\
& & &  \ddots   & & & & & & & \vdots\\
& & -\alpha_{\la{i1}}& \dots & g_i+\mu+\sigma  & \dots & & &  -\alpha_{\la{in}} & & E_{i}\\
& & & &\vdots &  \ddots & & & & & \vdots\\
& & & & & &  r_{in}+\mu+\sigma &-\alpha_{\ra{in}} &  & & E_{in}\\
& & & &-g_im_{in} & & &   \alpha_{\ra{in}}+\mu+\sigma & & &  E_{\ra{in}} \\
& & & & & &-r_{in} &  &   \alpha_{\la{in}}+\mu+\sigma & & E_{\la{in}}\\
\end{block}
\end{blockarray}
\end{equation}

Finally, we have the same formulation for the corresponding block diagonal matrices ($V_I(i)$) of $I$, with the only differences being that all $\sigma$ are replaced by $\gamma$. As a consequence, $V$ is a lower-triangular matrix. 

As for the transmission part of the formulation which corresponds to the new generation of infectious population, we should have:

\begin{equation}
\mathcal{F}=\left[\begin{array}{@{}c|c@{}}
\bigzero &  F+H\\
\hline
\bigzero & \bigzero 
\end{array}\right]
\end{equation}
In the equation, $F$ represents the disease transmission that are related to local compartments and $H$ accounts for all transmissions that happen during travel. In this regard, we can also break $F+H$ into $n^2$ blocks of matrix with each block $(F+H)(Z_i^E,Z_j^I)$ being a $3n\times 3n$ matrix of the following form: 

\begin{equation}
F+H=\begin{bmatrix}
F+H(Z_1^E,Z_1^I) & \dots & F+H(Z_1^E,Z_n^I) \\
F+H(Z_2^E,Z_1^I) & \dots & F+H(Z_2^E,Z_n^I)  \\
\vdots & \ddots & \vdots\\
F+H(Z_n^E,Z_1^I) & \dots & F+H(Z_n^E,Z_n^I) 
\end{bmatrix},
\end{equation}

\begin{equation}
(F+H)(Z_i^E,Z_j^I)=
\tiny
\begin{blockarray}{cccccccc}
\small
I_{j1} & I_{\ra{j1}} &  I_{\la{j1}} & \dots & I_{jn} & I_{\ra{jn}} &  I_{\la{jn}} \\
& & & & & & \\
\begin{block}{[ccccccc]l}
\beta_1^A\kappa_{i1}\frac{N_{i1}}{N_1^p} & & &  & & & &   \dot{E}_{i1}\\
& h_{\ra{i1},\ra{j1}}& h_{\ra{i1},\la{j1}} & \dots & & h_{\ra{i1},\ra{jn}}& h_{\ra{i1},\la{jn}} & \dot{E}_{\ra{i1}}\\ 
& h_{\la{i1},\ra{j1}} & h_{\la{i1},\la{j1}} & \dots & & h_{\la{i1},\ra{jn}} & h_{\la{i1},\la{jn}} & \dot{E}_{\la{i1}} \\
& & &  \ddots   & & & & \vdots\\
& & &  & \beta_n^A\kappa_{in}\frac{N_{in}}{N_n^p} & & &  \dot{E}_{in}\\
& h_{\ra{in},\ra{j1}}& h_{\ra{in},\la{j1}}& \dots & & h_{\ra{in},\ra{jn}}& h_{\ra{in},\la{jn}} & \dot{E}_{\ra{in}}\\
& h_{\la{in},\ra{j1}} & h_{\la{in},\la{j1}} & \dots & & h_{\la{in},\ra{jn}} & h_{\la{in},\la{jn}} & \dot{E}_{\la{in}}\\
\end{block}
\end{blockarray}
\end{equation}

and each $h$ being represented as:
\begin{equation}
\begin{aligned}
h_{\la{ij},\ra{kl}}=\sum_{d=1}^D c_{ij}^{d} \beta_d^T\kappa_{\ra{kl},\la{ij}}^d  \frac{{N}_{\la{ij}}}{N_{\ra{kl}}}, \quad h_{\ra{ij},\ra{kl}}=\sum_{d=1}^D c_{ij}^{d} \beta_d^T\kappa_{\ra{kl},\ra{ij}}^d  \frac{{N}_{\ra{ij}}}{N_{\ra{kl}}}\\
h_{\la{ij},\la{kl}}=\sum_{d=1}^D c_{ij}^{d} \beta_d^T\kappa_{\la{kl},\la{ij}}^d  \frac{{N}_{\la{ij}}}{N_{\la{kl}}},\quad h_{\ra{ij},\la{kl}}=\sum_{d=1}^D c_{ij}^{d} \beta_d^T\kappa_{\la{kl},\ra{ij}}^d  \frac{{N}_{\ra{ij}}}{N_{\la{kl}}}\\
\end{aligned}
\end{equation}

\begin{prop}
	$V_E(i)$ and $V_I(i)$ in $V$ are invertible and $V_E(i)^{-1}$ and $V_I(i)^{-1}$ are strictly positive. 
	\label{prop:inverse}
\end{prop}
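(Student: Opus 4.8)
The plan is to recognize $V_E(i)$ and $V_I(i)$ as nonsingular, irreducible M-matrices and then invoke the standard M-matrix theory to obtain entrywise positivity of the inverse. First I would verify that $V_E(i)$ is a Z-matrix: reading off the displayed block, the diagonal entries ($r_{ij}+\mu+\sigma$, $\alpha_{\ra{ij}}+\mu+\sigma$, $\alpha_{\la{ij}}+\mu+\sigma$, and $g_i+\mu+\sigma$) are strictly positive, while every off-diagonal entry ($-\alpha_{\ra{ij}}$, $-g_im_{ij}$, $-r_{ij}$, $-\alpha_{\la{ij}}$) is nonpositive. This sign pattern is exactly the negated mobility structure of Section~\ref{sec:mobility} restricted to the resident compartments of zone $i$, augmented on the diagonal by the latency rate $\sigma$.

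Next I would establish strict diagonal dominance by columns, which is where the conservation structure of the mobility model does the work. Summing each column, the single off-diagonal term cancels its matching piece in the diagonal and leaves a common excess of $\mu+\sigma$: for each $E_{ij}$, $E_{\ra{ij}}$, $E_{\la{ij}}$ column the off-diagonal magnitude ($r_{ij}$, $\alpha_{\ra{ij}}$, or $\alpha_{\la{ij}}$) is cancelled exactly, and for the $E_i$ column the diagonal $g_i+\mu+\sigma$ exceeds $\sum_j g_im_{ij}=g_i$ again by $\mu+\sigma>0$. Thus $V_E(i)^{T}$ is a strictly row-diagonally-dominant Z-matrix with positive diagonal, so the Levy--Desplanques theorem gives nonsingularity, and being a Z-matrix it is a nonsingular M-matrix with $(V_E(i)^{T})^{-1}\ge 0$; transposing shows $V_E(i)$ is invertible with $V_E(i)^{-1}\ge 0$ entrywise.

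To upgrade nonnegativity to strict positivity I would argue irreducibility. Writing $V_E(i)=sI-B$ with $s$ larger than the largest diagonal entry and $B\ge 0$, the off-diagonal nonzeros of $B$ give precisely the directed edges $E_i\to E_{\ra{ij}}\to E_{ij}\to E_{\la{ij}}\to E_i$, i.e.\ the home--out--activity--return cycles of the mobility model. Since every such cycle passes through the common home node $E_i$, each compartment reaches every other, so $B$ (hence $V_E(i)$) is irreducible; this is the within-zone analogue of the strong connectivity already used in the equilibrium proposition. Expanding $V_E(i)^{-1}=\sum_{k\ge 0} B^{k}/s^{k+1}$, which converges because diagonal dominance forces $s>\rho(B)$, irreducibility guarantees that for every index pair $(p,q)$ some power $B^{k}$ has a positive $(p,q)$ entry, so $V_E(i)^{-1}>0$ entrywise.

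Finally, $V_I(i)$ has the identical sparsity and sign pattern with $\sigma$ replaced by $\gamma$, so the column excess becomes $\mu+\gamma>0$ and the entire argument transfers verbatim. The main obstacle is bookkeeping rather than any single estimate: confirming that the column cancellations hold uniformly across all four compartment types and that the off-diagonal pattern forms a single strongly connected component. Once the nonsingular-M-matrix property and irreducibility are in place, invertibility and strict positivity follow from standard results (e.g.\ Berman--Plemmons).
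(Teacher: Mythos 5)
Your proof follows essentially the same route as the paper's: verify the Z-sign pattern, use the constant column sum $\mu+\sigma$ (column diagonal dominance) to conclude that $V_E(i)$ is a nonsingular M-matrix, and transfer the argument verbatim to $V_I(i)$ with $\gamma$ in place of $\sigma$. If anything, your version is the more complete one: the paper stops at the nonsingular-M-matrix property, which by itself only yields $V_E(i)^{-1}\ge 0$, whereas your irreducibility argument via the home--travel--activity--return cycles through the common compartment $E_i$ (together with the Neumann series) is exactly the extra step needed to upgrade nonnegativity to the strict positivity $V_E(i)^{-1}>0$ asserted in the statement.
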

\begin{proof}
	Since $V_E(i)$ and $V_I(i)$ have similar structure, without loss of generality, we here only show that $V_E(i)$ is invertible and same argument can be applied to $V_I(i)$.
	
	We note that, for $V_E(i)$, all diagonal elements are strictly positive and all off diagonal elements are either zero or strictly negative. In this regard, $V_E(i)$ satisfies the Z-sign pattern. In addition, we have the column sum for each column of $V_E(i)$ being $\mu+\sigma$, which is also strictly positive. This suggests that all eigenvalues of $V_E(i)$ have positive real part. This combines with the Z-sign pattern imply that $V_E(i)$ is a non-singular M-matrix and this leads to $V_E(i)^{-1}>0$. 
\end{proof}

Based on Proposition~\ref{prop:inverse}, we can write the inverse of $V$ as following:
\begin{equation}
V^{-1}=\begin{bmatrix}
V_E^{-1} & 0 \\
V_I^{-1}\Sigma V_E^{-1} & V_I^{-1}\\
\end{bmatrix}
\end{equation}

And $FV^{-1}$ is expressed as:
\begin{equation}
FV^{-1}=\begin{bmatrix}
(F+H) V_I^{-1}\Sigma V_E^{-1} & HV_I^{-1} \\
0 & 0
\end{bmatrix}
\label{eq:r0final}
\end{equation}

As the (1,1) entry in equation~\ref{eq:r0final} denotes the number of secondary infections that a new entrance in $E$ will produce, we can therefore determine the basic reproduction number ($R_0$) by
\begin{equation}
R_0=\sigma \rho((F+H) (V_EV_I)^{-1})
\label{eq:R0_value}
\end{equation}
where $\rho$ refers to the spectral radius (the real part of the largest eigenvalue) and $(V_EV_I)^{-1}$ is a block diagonal matrix. 
\begin{prop}
	The average number of secondary infectious due to a single infectious person into a fully susceptible population is given by $R_0=\rho\{(F+H)V^{-1}\}$, with $\rho$ denoting the spectral radius. In addition, if $R_0<1$ then the DFE is locally asymptotically stable. Otherwise the DFE is unstable and the disease will invade the population. 
	\label{prop:r0}
\end{prop}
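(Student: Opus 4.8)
The plan is to recognize this as a direct instance of the next-generation matrix (NGM) framework of van den Driessche and Watmough, so that the proof reduces to three pieces: (i) verifying that the infected subsystem $(E,I)$ satisfies the structural hypotheses of that framework, (ii) reading the spectral radius off the block decomposition already assembled above, and (iii) invoking the standard M-matrix comparison lemma to convert the threshold $R_0=1$ into a sign condition on the spectral abscissa of the Jacobian at the DFE.

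First I would confirm the NGM hypotheses for the infected compartments. The new-infection operator $\mathcal{F}$ vanishes at $x_0$, and its Jacobian there is the nonnegative matrix $D\mathcal{F}(x_0)=\left[\begin{smallmatrix}\mathbf{0} & F+H\\ \mathbf{0}&\mathbf{0}\end{smallmatrix}\right]$, since every entry of $F+H$ is a product of nonnegative quantities ($\beta^A_j,\beta^T_d,\kappa,c_{ij}^d$ and equilibrium population ratios). The transfer matrix $V$ has non-singular M-matrix diagonal blocks $V_E(i),V_I(i)$ by Proposition~\ref{prop:inverse}; since $V$ is block lower-triangular in the $(E,I)$ ordering, it inherits the Z-sign pattern and a spectrum with positive real parts, hence is itself a non-singular M-matrix with $V^{-1}\ge 0$. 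I would also check the disease-free condition: setting $E=I=0$ leaves the $S$ and $R$ equations governed by the mobility dynamics alone, which are globally asymptotically stable by the mobility equilibrium proposition, so the DFE is stable in the absence of new infections.

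Next I would extract $R_0$ from the block computation. Using $V^{-1}=\left[\begin{smallmatrix}V_E^{-1}&0\\ V_I^{-1}\Sigma V_E^{-1}&V_I^{-1}\end{smallmatrix}\right]$, the product $\mathcal{F}V^{-1}$ is block upper-triangular with zero lower-right block, so its spectral radius equals that of its $(1,1)$ block $(F+H)V_I^{-1}\Sigma V_E^{-1}$ recorded in equation~\eqref{eq:r0final}. Because $\Sigma=\sigma I$ factors out as a scalar, this gives $R_0=\rho\{\mathcal{F}V^{-1}\}=\sigma\,\rho\{(F+H)(V_EV_I)^{-1}\}$, the quantity written $\rho\{(F+H)V^{-1}\}$ in the statement and matching equation~\eqref{eq:R0_value}. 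Biologically this is exactly what one expects: a newly infected individual enters $E$, progresses to $I$ at rate $\sigma$ (the $\Sigma V_E^{-1}$ factor), and there generates secondary exposures through $F+H$, so the $(1,1)$ block is the expected number of next-generation infections.

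Finally, for the stability dichotomy I would appeal to the comparison lemma for M-matrices: if $F+H\ge 0$ and $V$ is a non-singular M-matrix, then the spectral abscissa $s(F+H-V)$ is negative, zero, or positive exactly when $\rho\{(F+H)V^{-1}\}$ is less than, equal to, or greater than $1$. The Jacobian of the full system at $x_0$ is block-triangular, with the uninfected block contributing only eigenvalues of negative real part (by the disease-free check) and the infected block equal to $\mathcal{F}-\mathcal{V}$ with Jacobian $F+H-V$; hence $s<0$, giving local asymptotic stability, precisely when $R_0<1$, and $s>0$, giving instability and disease invasion, precisely when $R_0>1$. I expect the main obstacle to be the bookkeeping that certifies $V$ as a genuine non-singular M-matrix at the full block level and that the $\mathcal{F}/\mathcal{V}$ split is an admissible new-infection/transfer decomposition, so that infections are routed into $E$ rather than $I$; once these structural facts are pinned down, both the formula and the threshold behavior follow directly from the cited framework.
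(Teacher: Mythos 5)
Your proposal is correct and takes exactly the route the paper intends: the paper omits the proof entirely, simply citing Theorem~2 of van den Driessche and Watmough, and your argument is a faithful, fleshed-out verification of that theorem's hypotheses (nonnegativity of $F+H$, the non-singular M-matrix structure of $V$ via Proposition~\ref{prop:inverse} and block triangularity, stability of the disease-free subsystem) followed by its standard conclusion. The block computation recovering $R_0=\sigma\,\rho\{(F+H)(V_EV_I)^{-1}\}$ matches equation~\eqref{eq:r0final} and \eqref{eq:R0_value}, so nothing further is needed.
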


We omit the proof of Proposition~\ref{prop:r0} as it follows directly from Theorem 2 in~\cite{van2002reproduction}. Note that the each entry in $F+H$ has specific physical interpretations, with the $(ij,kl)$th entry representing the new infections that an additional infectious person in $kl$ compartment will produce in the $ij$ compartment. Consequently, each eigenvalue corresponds to the column of $(F+H)V^{-1}$ refers to the $R_0$ value of that particular compartment, and that the $R_0^i$ for zone $i$ is determined by the largest $R_0$ value among all compartments that are associated with the residents of $i$.

We further observe that the column sum of $V_E$ is $\mu+\sigma$, and the column sum of $V_I$ is $\mu+\gamma$. As a result, the row sum of each column in $(V_EV_I)^{-1}$ is given by $\frac{1}{(\mu+\gamma)(\mu+\sigma)}$. Since $(V_EV_I)$ is a block diagonal matrix, so is $(V_EV_I)^{-1}$. Then for the $ij$th column of the $i$th block, only the entries in rows $i1,...,\la{in}$ may be non-zero and all other entries of the column are strictly 0. Then we can write out the vector for column sum $C_{ij}$ explicitly as:
\begin{eqnarray}
\begin{aligned}
C_{ij}&=\mathbbm{1}^{\intercal}(\sum_i f_i (V_EV_I)^{-1}_{ij}+h (V_EV_I)^{-1}_{ij})\\
& = \beta_1^A\kappa_{11}\frac{N_{11}}{N_1^p}v_{ij,i1}+\dots+\beta_n^A\kappa_{1n}\frac{N_{1n}}{N_n^p}v_{ij,in}+\beta_1^A\kappa_{21}\frac{N_{21}}{N_1^p}v_{ij,i1}+\dots+\beta_n^A\kappa_{2n}\frac{N_{2n}}{N_n^p}v_{ij,in}+\dots+ \beta_n^A\kappa_{nn}\frac{N_{nn}}{N_n^p}v_{ij,in}\label{zone_inf}\\
&\,+\sum_{m=1}^n(\sum_{k=1}^n\sum_{l=1}^n h_{\ra{kl},\ra{im}}+h_{\la{kl},\ra{im}})v_{ij,\ra{im}}+(\sum_{k=1}^n\sum_{l=1}^n h_{\ra{kl},\la{im}}+h_{\la{kl},\la{im}})v_{ij,\la{im}}\\
& =\sum_{m=1}^n (\beta_m^A\kappa_{1m}\frac{N_{1m}}{N_m^p}+\dots+\beta_m^A\kappa_{nm}\frac{N_{nm}}{N_m^p})v_{ij,im}\\
&\,+\sum_{m=1}^n (\sum_{k=1}^n\sum_{l=1}^n h_{\ra{kl},\ra{im}}
+h_{\la{kl},\ra{im}})v_{ij,\ra{im}}+(\sum_{k=1}^n\sum_{l=1}^n h_{\ra{kl},\la{im}}+h_{\la{kl},\la{im}})v_{ij,\la{im}}\\
\end{aligned}
\end{eqnarray}

As the column sum of $C_{ij}$ equals $\frac{1}{(\mu+\gamma)(\mu+\sigma)}$, we then have
\begin{equation}
\sum_{m=1}^n v_{ij,im} = \frac{1}{(\mu+\gamma)(\mu+\sigma)}
\end{equation}

Denote $\beta_{min}^A\kappa_{min}$ and $\beta_{max}^A\kappa_{max}$ be the the greatest of all activity compartments, e.g., 
\begin{equation}
\beta_{max}^A\kappa_{max}=\max_{i,j=1,...,n} \beta_i^A\kappa_{ji},\quad \beta_{min}^A\kappa_{min}=\min_{i,j=1,...,n} \beta_i^A\kappa_{ji}
\end{equation}
and $h_{max}$ and $h_{min}$ as the the trip segment with the highest and lowest contagious rate:
\begin{equation}
h_{max}=\max\{\max_{i,j=1,...,n} \sum_{k=1}^n\sum_{l=1} h_{\ra{kl},\ra{ij}}
+h_{\la{kl},\ra{ij}},\max_{i,j=1,...,n} \sum_{k=1}^n\sum_{l=1} h_{\ra{kl},\la{im}}+h_{\la{kl},\la{im}}\}
\end{equation}
\begin{equation}
h_{min}=\min\{\min_{i,j=1,...,n} \sum_{k=1}^n\sum_{l=1} h_{\ra{kl},\ra{ij}}
+h_{\la{kl},\ra{ij}},\min_{i,j=1,...,n} \sum_{k=1}^n\sum_{l=1} h_{\ra{kl},\la{im}}+h_{\la{kl},\la{im}}\}
\end{equation}
Then we have \begin{equation}
\frac{\min \{\beta_{min}^A\kappa_{min},h_{min}\}}{(\mu+\sigma)(\mu+\gamma)}  \leq C \leq  \frac{\max \{\beta_{max}^A\kappa_{max},h_{max}\}}{(\mu+\sigma)(\mu+\gamma)}
\end{equation}
and
\begin{equation}
\frac{\sigma \min \{\beta_{min}^A\kappa_{min},h_{min}\}}{(\mu+\sigma)(\mu+\gamma)}  \leq R_0\leq   \frac{\sigma \max \{\beta_{max}^A\kappa_{max},h_{max}\}}{(\mu+\sigma)(\mu+\gamma)}
\end{equation}

\begin{cor}
	Let $\beta_{max}^A\kappa_{max}$ be the highest within zone contagion rate of all zones, and $\beta_{min}^A\kappa_{min}$  be the lowest within zone contagion rate of all zones. If we assume $\kappa_{ij}$ being the same at all activity locations, then $R_0$ is bounded by the travel segments with the highest contagion rate. 
	\label{col:R0}
\end{cor}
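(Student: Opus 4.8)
The plan is to treat Corollary~\ref{col:R0} as a specialization of the two–sided bound on $R_0$ established immediately above its statement, so that no fresh spectral estimate is required; the whole argument reduces to feeding the hypothesis $\kappa_{ij}\equiv\kappa$ into the column–sum expression for $(F+H)(V_EV_I)^{-1}$ and tracking how the activity part of that sum collapses. Recall that $R_0=\rho\{(F+H)V^{-1}\}$ by Proposition~\ref{prop:r0}, and that $\rho$ is squeezed between the smallest and largest column sums $C_{ij}$ precisely because $(F+H)(V_EV_I)^{-1}$ is nonnegative, each $V_E(i)^{-1},V_I(i)^{-1}$ being strictly positive by Proposition~\ref{prop:inverse}.

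The first step is to simplify the activity contribution to $C_{ij}$. In the column sum the activity terms appear as $\sum_{m}\big(\beta^A_m\kappa_{1m}\tfrac{N_{1m}}{N_m^p}+\dots+\beta^A_m\kappa_{nm}\tfrac{N_{nm}}{N_m^p}\big)v_{ij,im}$; setting every $\kappa_{km}=\kappa$ and invoking the visitor identity $N_m^p=\sum_k N_{km}$ from~\eqref{eq:vis}, the bracket telescopes to $\beta^A_m\kappa\,\tfrac{\sum_k N_{km}}{N_m^p}=\beta^A_m\kappa$. Hence the activity part of every column sum collapses to $\kappa\sum_m\beta^A_m\,v_{ij,im}$, a weight–$v$ average of the per–zone rates $\beta^A_m\kappa$ with the spatial exposure heterogeneity entirely removed.

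The second step is to bound the two pieces separately. Since $\sum_m v_{ij,im}=\tfrac{1}{(\mu+\sigma)(\mu+\gamma)}$ with all $v_{ij,im}\ge 0$, the homogenized activity piece is pinned between $\kappa\beta^A_{min}$ and $\kappa\beta^A_{max}$ times that constant, while the travel piece is controlled by $h_{min}$ and $h_{max}$ exactly as in the general estimate. Multiplying the largest column sum by $\sigma$ then yields $R_0\le\frac{\sigma\max\{\kappa\beta^A_{max},h_{max}\}}{(\mu+\sigma)(\mu+\gamma)}$, with the matching lower bound governed by $\min\{\kappa\beta^A_{min},h_{min}\}$.

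The concluding step, and the point I expect to be the main obstacle, is to argue that the travel term is the one that actually binds, i.e. that $\max\{\kappa\beta^A_{max},h_{max}\}=h_{max}$, which is exactly the assertion that $R_0$ is bounded by the travel segment with the highest contagion rate. Once activity contacts are homogenized, the only surviving source of cross–compartment variation is carried by the $h_{\cdot,\cdot}$ entries, so the comparison reduces to checking $h_{max}\ge\kappa\beta^A_{max}$. This is not automatic from the algebra and must be secured from the modeling regime of interest: in dense urban transit the per–contact travel exposure $\beta^T_d$ combined with the high in–vehicle contact counts $\kappa^d_{\cdot,\cdot}$ dominate a single fixed activity rate, so $h_{max}$ is the heterogeneity–driving quantity. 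I would therefore close by making this dominance condition explicit and observing that, under it, the homogenized activity contagion contributes only the uniform baseline $\kappa\beta^A_{max}$ while the binding term is $h_{max}$, giving $R_0\le\frac{\sigma h_{max}}{(\mu+\sigma)(\mu+\gamma)}$.
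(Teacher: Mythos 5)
Your proposal is correct and follows essentially the same route as the paper: the paper gives no separate proof of Corollary~\ref{col:R0}, which is meant to follow directly from the two-sided column-sum bound $\frac{\sigma\min\{\beta_{min}^A\kappa_{min},h_{min}\}}{(\mu+\sigma)(\mu+\gamma)}\leq R_0\leq\frac{\sigma\max\{\beta_{max}^A\kappa_{max},h_{max}\}}{(\mu+\sigma)(\mu+\gamma)}$ derived just above it, and your specialization of that bound under $\kappa_{ij}\equiv\kappa$ (including the clean collapse of the activity bracket via $N_m^p=\sum_k N_{km}$) is exactly the intended argument. Your observation that the final claim requires the dominance condition $h_{max}\geq\kappa\beta_{max}^A$, which does not follow from the algebra alone, is accurate: the paper leaves this assumption implicit, so making it explicit is a genuine improvement rather than a deviation from the paper's reasoning.
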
 


\section{Control the spread of disease within urban transportation system}
With the Trans-SEIR model capturing the mobility and disease dynamics, we next investigate the research question on the control of infectious diseases by placing entrance screening for transportation system. In particular, we have seen the implementation of radiation thermometers during the 2003 SARS, 2012 MERS and 2020 COVID-19 outbreaks at subway and bus stations and major transportation hubs. The entrance screening aims to identify passengers at risk with abnormal body temperature or susceptible symptoms. And additional measures such as distributing masks and random inspection by medical workers at the entrance of transportation systems may also be efficient to reduce the risk exposure during travel. While entrance screening can be of great value for the preventative purpose before the actual outbreak or to slow down the invasion of diseases during an outbreak, there is a trade-off among the success rate of screening, the amount of travelers that need to be examined and the incurred externalities such as excessive entry delays and limited manpower and medical resources. In this regard, the essential issue to be addressed is how to optimally distribute the available resources over the urban transportation systems to effectively curb the spread of infectious diseases. 

For the real-world transportation system, it is hardly possible to perform the entry control for all travel modes. Instead, it is only viable to conduct such control over the medium and high capacity travel modes, such as buses and metros. As a consequence, in our study, we assume that the travel modes are divided into three categories:
\begin{enumerate}
	\item \textbf{Low capacity mode} such as private vehicles and taxis
	\item \textbf{Medium capacity mode} such as vans and buses
	\item \textbf{High capacity mode} such as metro system
\end{enumerate}
We do not consider the low capacity mode as the control target, and the low capacity mode typically has fewer passengers per vehicle and therefore constituting the lowest chance of getting infected. On the contrary, passengers using medium or high capacity mode are usually exposed to more co-riders in an enclosed compartment for longer trip duration, and are therefore prone to higher chance of being infected. To frame the optimal entrance control problem, we introduce additional parameters $\epsilon$ and $\xi$, where $\xi$ denotes the success rate of screening of infectious travelers and $\epsilon_{ij}$ represents the fraction of passengers traveling between $i$ and $j$ that will be screened, with $\epsilon_{ij}\in[0,1]$. For simply, here we only consider that the same $\xi$ and $\epsilon_{ij}$ will be applied to all different modes, which can be easily extended to account for mode-specific control parameters. 

As the first step of framing the optimal control problem, we focus on scenario without resource constraint and entrance control alters the disease dynamics by reducing the number of infectious passengers entering the travel compartments $I_{\ra{ij}}$ and $I_{\la{ij}}$ once they are identified to be infected at the entrance. We consider that the identified infectious passengers will be permanently quarantined until they are no longer infectious, so that they join the recovered travel compartments $R_{\ra{ij}}$ and $R_{\la{ij}}$ respectively and shall no longer produce further infections. In this regard, we can rewrite part of the disease system as: 
\begin{eqnarray}
\begin{aligned}
\frac{{I}_{\ra{ij}}}{dt}=&(1-\epsilon_{\ra{ij}})g_{i}m_{ij}I_{ii}-\alpha_{\ra{ij}}I_{\ra{ij}}+\sigma E_{\ra{ij}}-\gamma I_{\ra{ij}}-\mu I_{\ra{ij}}\\
\frac{{I}_{\la{ij}}}{dt}=&(1-\epsilon_{\la{ij}})r_{ij}I_{ij}-\alpha_{\la{ij}}I_{\la{ij}}+\sigma E_{\la{ij}}-\gamma I_{\la{ij}}-\mu I_{\la{ij}}\\
\frac{{R}_{\ra{ij}}}{dt}=&g_{i}m_{ij}(R_{ii}+\epsilon I_{ii})-\alpha_{\ra{ij}}R_{\ra{ij}}+\gamma I_{\ra{ij}}-\mu R_{\ra{ij}}\\
\frac{{R}_{\la{ij}}}{dt}=&r_{ij}(R_{ij}+\epsilon I_{ij})-\alpha_{\la{ij}}R_{\la{ij}}+\gamma I_{\la{ij}}-\mu R_{\la{ij}}\\
\end{aligned}
\label{eq:control}
\end{eqnarray}
so that the whole disease system is still well defined and the summation of all population still equals $N$. Without loss of generality, we consider the corresponding control effectiveness of the mode $m$ being $\epsilon^m$, then $\epsilon$ simply refers to
\begin{equation}
\epsilon_{ij}=\sum_{m}c_{ij}^m\epsilon^m
\end{equation}By controlling the system without resource limitation, we may let $\epsilon=1$, e.g. all travelers will be examined, and the resulting system dynamics can be decomposed into  $F-\tilde{V}$ where $\tilde{V}$ takes the form:

\begin{equation}
\tilde{V}=\begin{bmatrix}
V_E & \bigzero \\
-\Sigma & \tilde{V}_I\\
\end{bmatrix}=\left[\begin{array}{@{}c|c@{}}
\begin{matrix}
V_{E}(1) &  &  \\
& \ddots & \\
& & V_{E}(n)
\end{matrix}
& \bigzero \\
\hline
-\Sigma &
\begin{matrix}
\begin{matrix}
\tilde{V}_{I}(1) &  &  \\
& \ddots & \\
& & \tilde{V}_{I}(n)
\end{matrix}
\end{matrix}
\end{array}\right]
\end{equation}
Specifically, $\tilde{V}_{I}(i)$ has the same diagonal elements of $V_{I}(i)$ but all its off-diagonal elements are zero. In this regard, each $\tilde{V}_I(i)$ is also a diagonal matrix and hence of larger dominant eigenvalue as compared to $V_I(i)$, suggesting an overall shorter infection period for the travelers. Following the same approach as discussed in the previous section, we can derive the reproduction number after transit control as:
\begin{equation}
\tilde{R}_0=\rho(F\tilde{V}^{-1})
\end{equation}
With curbing the spread of infectious through transportation control as the primary goal, we seek to identify proper $\epsilon$ to minimize the $\tilde{R}_0$, and the gap between $R_0$ and $\tilde{R}_0$ is therefore the reduction by controlling travel contagion. We note, however, there is no clean and close representation to formulate the relationship between $\epsilon$ and $\tilde{R}_0$, which creates a barrier for devising optimal allocation strategy of limited resources. To overcome this barrier, we seek an alternate approach to avoid this complication and build the connection between the two variables through a modified next generation matrix where we have
\begin{equation}
F(\epsilon)=\left[\begin{array}{@{}c|c@{}}
\bigzero &  F+H\\
\hline
\bigzero & \bar{V_I}(\epsilon),
\end{array}\right],
\tilde{V}=\begin{bmatrix}
V_E & \bigzero \\
-\Sigma & \tilde{V}_I\\
\end{bmatrix}
\end{equation}

where $\tilde{V}_I$ contains only the diagonal elements of $V_I$ and $-\bar{V}_I$ includes all the off-diagonal entries of $V_I$:
\begin{equation}
\bar{V}_I(i,\epsilon)=\tiny
\begin{blockarray}{ccccccccccc}
\small
I_{i1} & I_{\ra{i1}} &  I_{\la{i1}} & \dots & I_{i} & \dots & I_{in} & I_{\ra{in}} &  I_{\la{in}} \\
& & & & & & & & & &\\
\begin{block}{[cccccccccc]l}
& \alpha_{\ra{i1}}& & & & & & & & & I_{i1}\\
&   & &  & (1-\epsilon_{\ra{i1}})g_im_{i1} & & & & & & I_{\ra{i1}}\\ 
(1-\epsilon_{\la{i1}})r_{i1} & &  & & \vdots & & & & & & I_{\la{i1}} \\
& & &  \ddots   & & & & & & & \vdots\\
& & \alpha_{\la{i1}}&   & \dots & & &  & \alpha_{\la{in}} & & I_{i}\\
& & & &\vdots &  \ddots & & & & & \vdots\\
& & & & & &   &\alpha_{\ra{in}} &  & & I_{in}\\
& & & &(1-\epsilon_{\ra{in}})g_im_{in} & & &    & & &  I_{\ra{in}} \\
& & & & & &(1-\epsilon_{\la{in}})r_{in} &  &    & & I_{\la{in}}\\
\end{block}
\end{blockarray}
\end{equation}

We denote $\tilde{K}$ as the next generation matrix following $F(\epsilon)\tilde{V}^{-1}$ and $K$ being the original next generation matrix. We have the following propositions
\begin{prop}
	$R_0>1$ if and only if $\tilde{R}_0>1$, $R_0$=0 if and only if $\tilde{R}_0=0$, and $R_0<1$ if and only if $\tilde{R}_0<1$
	\label{prop:equivalence}
\end{prop}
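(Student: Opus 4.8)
The plan is to recognize the original next-generation matrix $K$ and the modified one $\tilde K = F(\epsilon)\tilde V^{-1}$ as two \emph{different but admissible} decompositions of one and the same linearized infection subsystem at the DFE $x_0$, and then to reduce every threshold comparison to the sign of the spectral abscissa $s$ of that common Jacobian. The first thing I would verify is that the two splittings describe the same operator, that is $F(\epsilon)-\tilde V = F - V(\epsilon)$, where $V(\epsilon)$ is the transition matrix that keeps the infected movement terms in place. This holds because $\tilde V_I$ retains exactly the diagonal of $V_I$ while $-\bar V_I(\epsilon)$ supplies exactly its off-diagonal entries, so the relocated movement terms cancel in the difference $F(\epsilon)-\tilde V$. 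Hence $K$ and $\tilde K$ are next-generation matrices for the same infected Jacobian $J=F-V(\epsilon)$, differing only in how the movement of the $I$ classes is bookkept between the ``new infection'' part and the ``transition'' part.

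Next I would check that each splitting is admissible in the sense required by the next-generation formalism: the transmission part is entrywise nonnegative and the transition part is a non-singular M-matrix. Nonnegativity of $F+H$ and of $\bar V_I(\epsilon)$ (whose entries are $\alpha_{\ra{ij}},\alpha_{\la{ij}},(1-\epsilon_{\ra{ij}})g_im_{ij},(1-\epsilon_{\la{ij}})r_{ij}$ with $\epsilon\in[0,1]$) is immediate, while the M-matrix property of $V_E$, $V_I$, and of the diagonal $\tilde V_I$ follows from Proposition~\ref{prop:inverse} together with the strictly positive column sums $\mu+\sigma$ and $\mu+\gamma$. With admissibility established, Theorem 2 of~\cite{van2002reproduction} (the same result already invoked for Proposition~\ref{prop:r0}) gives, for either splitting, $\rho(\mathcal F \mathcal V^{-1})<1 \Leftrightarrow s(J)<0$, $=1\Leftrightarrow s(J)=0$, and $>1\Leftrightarrow s(J)>0$. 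Since $J$ is shared by both decompositions, $R_0$ and $\tilde R_0$ must lie on the same side of $1$, which delivers the $R_0>1\Leftrightarrow\tilde R_0>1$ and $R_0<1\Leftrightarrow\tilde R_0<1$ equivalences at once.

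For the ``$=0$'' equivalence I would argue directly on the transmission graph rather than through the spectral abscissa. For a nonnegative next-generation matrix, $\rho=0$ means there is no closed chain of secondary infections, i.e. the genuine transmission block $F+H$ vanishes; conversely, if $F+H=0$ then the exposed rows generate no new infections and the reproduction number collapses. The delicate point is that $\tilde K$ additionally carries the movement block $\bar V_I(\epsilon)\tilde V_I^{-1}$, whose support is the strongly connected cycle $i\to\ra{ij}\to ij\to\la{ij}\to i$; I would show that its spectral radius is strictly below $1$, since every node leaks mass at rate $\mu+\gamma$ and each cycle gain is therefore a product of factors smaller than one. This guarantees the movement spectrum never crosses the threshold and must be separated out from the genuine count of new infections when reading off $\tilde R_0$.

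The step I expect to be the main obstacle is exactly this last one. Establishing the common-Jacobian identity and the M-matrix admissibility of the unconventional splitting $(F(\epsilon),\tilde V)$ is essentially bookkeeping by comparison with the original decomposition. The conceptual difficulty is confined to the ``$=0$'' case: because the modified splitting artificially injects the movement-recycling block into $F(\epsilon)$, one must argue carefully that this spurious, strictly-subcritical spectrum is inert for the threshold and does not contribute to the reproductive count, so that $\tilde R_0$ vanishes precisely when the true transmission block does.
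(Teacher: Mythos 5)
Your proposal takes essentially the same route as the paper's proof, which is a two-line sketch: the sign of $R_0-1$ agrees with the sign of $s(F-V)$ by Theorem~2 of~\cite{van2002reproduction}, and ``the corresponding $F-\tilde V$ still holds,'' i.e.\ $F(\epsilon)-\tilde V$ is the same Jacobian, so $\tilde R_0$ sits on the same side of $1$. You supply exactly the details the paper leaves implicit — the common-Jacobian identity $F(\epsilon)-\tilde V = F-V(\epsilon)$ and the admissibility of the modified splitting (nonnegativity of $F+H$ and $\bar V_I(\epsilon)$ for $\epsilon\in[0,1]$, and the M-matrix property of $\tilde V$, which is block lower triangular with $V_E$ and the positive diagonal $\tilde V_I$ on the diagonal) — so the $>1$ and $<1$ equivalences are handled the same way, just more carefully. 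Where you go beyond the paper is the ``$=0$'' clause, which the paper's proof silently skips: you are right that this is the delicate point, since when $F+H=0$ one still has $\rho(F(\epsilon)\tilde V^{-1})=\rho\bigl(\bar V_I(\epsilon)\tilde V_I^{-1}\bigr)$, which is strictly positive (though below $1$ by the column-sum bound $g_i/(g_i+\gamma+\mu)<1$ etc.) whenever any movement survives screening; so the literal statement $R_0=0\Leftrightarrow\tilde R_0=0$ only holds under your reinterpretation that the inert movement spectrum is separated out of $\tilde R_0$. That caveat is a genuine observation about the proposition as written, not a flaw in your argument.
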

\begin{proof}
	We have that $R_0=1$ while $s(F-V)=0$, and $R_0>1$ if $s(F-V)>0$, and $R_0<1$ if $s(F-V)<0$. By deriving $\tilde{R}_0$, we see that the corresponding $F-\tilde{V}$ still holds. This completes the proof. 
\end{proof}

\begin{prop}
	A disease is controllable through entrance screening in transportation system if and only if $\rho(F(1)\tilde{V}^{-1})<1$. 
	\label{prop:control_disease}
\end{prop}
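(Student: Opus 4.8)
The plan is to read ``controllable through entrance screening'' as the existence of a feasible screening profile $\epsilon$ (one value per controllable directed travel segment, ranging over a box $[0,1]^m$, with $\epsilon=1$ the maximal admissible control in which every rider on the controllable modes is screened) for which the controlled disease-free equilibrium of the system~\eqref{eq:control} is locally asymptotically stable. The whole argument then reduces this existence question to the single spectral inequality $\rho(F(1)\tilde V^{-1})<1$ by combining a threshold characterization with a monotonicity argument in $\epsilon$.

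First I would fix an arbitrary feasible $\epsilon$ and argue that $(F(\epsilon),\tilde V)$ is a \emph{valid} next-generation splitting of the linearization at the controlled DFE, in the sense of \cite{van2002reproduction}: $F(\epsilon)\ge 0$ entrywise and $\tilde V$ is a nonsingular $M$-matrix. The essential point is that relocating the off-diagonal intra-$I$ transport block $-\bar V_I$ out of the transition matrix and into the ``new generation'' matrix $F(\epsilon)$ preserves the sign pattern the theory requires. Granting this, Proposition~\ref{prop:r0} together with Proposition~\ref{prop:equivalence} gives that the controlled DFE is stable if and only if $\rho\bigl(F(\epsilon)\tilde V^{-1}\bigr)<1$, since $\rho\bigl(F(\epsilon)\tilde V^{-1}\bigr)-1$ and the stability modulus $s(\mathcal F-\mathcal V)$ change sign together. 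I would also record nonnegativity of the product so that Perron--Frobenius applies: $F(\epsilon)$ is nonnegative because its upper block $F+H$ collects transmission rates and its lower block $\bar V_I(\epsilon)$ collects the flow rates $\alpha_{\ra{ij}},\alpha_{\la{ij}},(1-\epsilon_{\ra{ij}})g_im_{ij},(1-\epsilon_{\la{ij}})r_{ij}$, all nonnegative for $\epsilon\le 1$; and $\tilde V^{-1}\ge 0$ because $\tilde V$ is block lower triangular with diagonal blocks $V_E(i)$ (nonsingular $M$-matrices with positive inverse by Proposition~\ref{prop:inverse}) and strictly positive diagonal blocks $\tilde V_I(i)$, so block-triangular inversion yields only nonnegative entries, whence $\rho$ is a genuine nonnegative dominant eigenvalue.

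Next I would exploit that the entire dependence on $\epsilon$ sits in the entries $(1-\epsilon_{\ra{ij}})g_im_{ij}$ and $(1-\epsilon_{\la{ij}})r_{ij}$ of $\bar V_I(\epsilon)$, which are nonincreasing in each component of $\epsilon$. Hence $F(\epsilon)$, and therefore the nonnegative product $F(\epsilon)\tilde V^{-1}$ (multiplying on the right by the fixed nonnegative $\tilde V^{-1}$), is entrywise nonincreasing in $\epsilon$. By the standard monotonicity of the spectral radius on nonnegative matrices ($0\le A\le B$ entrywise implies $\rho(A)\le\rho(B)$), the map $\epsilon\mapsto\rho\bigl(F(\epsilon)\tilde V^{-1}\bigr)$ is nonincreasing on $[0,1]^m$ and thus attains its minimum at the corner $\epsilon=1$, i.e. $\min_{\epsilon}\rho\bigl(F(\epsilon)\tilde V^{-1}\bigr)=\rho\bigl(F(1)\tilde V^{-1}\bigr)$. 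Combining with the threshold characterization, a stabilizing control exists if and only if this minimum drops below one, that is, if and only if $\rho\bigl(F(1)\tilde V^{-1}\bigr)<1$, which is exactly the claim.

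I expect the main obstacle to be the first step: one must verify carefully that after moving the transport block $\bar V_I$ into the infection matrix, the remaining transition matrix $\tilde V$ is still a nonsingular $M$-matrix and the reduced pair $(F(\epsilon),\tilde V)$ genuinely meets the hypotheses of \cite{van2002reproduction}, so that the reproduction number of the \emph{modified} splitting inherits the ``$<1\Leftrightarrow$ stable'' equivalence supplied by Proposition~\ref{prop:equivalence}. Once that reparametrization is justified, the nonnegativity bookkeeping and the Perron--Frobenius monotonicity in the remaining steps are routine.
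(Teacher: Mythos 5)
Your argument is correct and rests on the same decomposition $(F(\epsilon),\tilde V)$ and the same threshold equivalence (Proposition~\ref{prop:equivalence}) that the paper uses; the difference lies in how much work each reading of ``controllable'' demands. The paper's proof is a one-line unwinding: it implicitly takes controllable to mean that full screening $\epsilon=1$ eliminates the disease, so the claim reduces to evaluating the modified splitting at $\epsilon=1$ and invoking Proposition~\ref{prop:equivalence} to transfer the condition $\rho(F(1)\tilde V^{-1})<1$ into stability of the DFE. You instead prove the stronger existential statement --- some feasible $\epsilon\in[0,1]^m$ stabilizes the controlled DFE if and only if $\rho(F(1)\tilde V^{-1})<1$ --- by adding a Perron--Frobenius monotonicity lemma: the only $\epsilon$-dependent entries of $F(\epsilon)$ are $(1-\epsilon_{\ra{ij}})g_im_{ij}$ and $(1-\epsilon_{\la{ij}})r_{ij}$, so $F(\epsilon)$ is entrywise nonincreasing in $\epsilon$, and since $\tilde V^{-1}\ge 0$ the map $\epsilon\mapsto\rho\bigl(F(\epsilon)\tilde V^{-1}\bigr)$ is nonincreasing and attains its minimum at the corner $\epsilon=1$. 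That extra step is precisely what the paper leaves implicit, and it is what makes the ``only if'' direction rigorous under the existential reading. Your care in verifying that $(F(\epsilon),\tilde V)$ remains a valid splitting after relocating $\bar V_I(\epsilon)$ --- $F(\epsilon)\ge 0$ entrywise and $\tilde V$ a nonsingular M-matrix with nonnegative inverse, the latter following from Proposition~\ref{prop:inverse} for the $V_E(i)$ blocks and from the positive diagonal of $\tilde V_I$ --- is also warranted, since that is exactly the hypothesis under which the sign equivalence of \cite{van2002reproduction} applies to the modified pair; both approaches then conclude identically.
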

The proof for Proposition~\ref{prop:control_disease} follows from the results of Proposition~\ref{prop:equivalence}, where $\rho(F(1)\tilde{V}^{-1})$ suggests that $\tilde{R}_0<1$ when $\epsilon=1$ and $\tilde{R}_0<1$ is equivalent to $R_0<1$. Hence the proposition states that the disease is controllable through public transportation if we may have access to unlimited resources to completely quarantine all infectious travelers and reduce the $R_0$ lower than 1. 

In addition, Proposition~\ref{prop:equivalence} states that the we can focus on controlling $\tilde{R}_0$ instead of $R_0$ to mitigate the impact from infectious disease, and more importantly eliminating the disease if we can lower $\tilde{R}_0$ to be smaller than 1. Cautions should be exercised, however, as $\tilde{R}_0$ does not have the same physical meaning as $R_0$ and can not be interpreted as the number secondary infections. Based on this equivalency, we can rephrase Corollary~\ref{col:R0} to derive the corresponding value for $\tilde{R}_0$ by measuring the column sum for $F(\epsilon)\tilde{V}^{-1}$ as:
\begin{equation}
\begin{aligned}
&\tilde{C}_{ij}(\epsilon)=\mathbbm{1}^{\intercal}(\sum_i f_i (V_E\tilde{V}_I)^{-1}_{ij}+h (V_E\tilde{V}_I)^{-1}_{ij}+\bar{V}(V_E\tilde{V}_I)^{-1}_{ij})\\
& = \frac{\beta_1^A\kappa_{11}+\sum_{j=1}^p (1-\epsilon_{\ra{1j}})g_1m_{1j}}{g_1+\gamma+\mu}\frac{N_{11}}{N_1^p}\tilde{v}_{ij,i1}+\dots+\frac{\beta_n^A\kappa_{1n}+(1-\epsilon_{\la{1n}})r_{1n}}{r_{1n}+\gamma+\mu}\frac{N_{1n}}{N_n^p}\tilde{v}_{ij,in} \\
&+\frac{\beta_1^A\kappa_{21}+(1-\epsilon_{\la{21}})r_{21}}{r_{21}+\gamma+\mu}\frac{N_{21}}{N_1^p}\tilde{v}_{ij,i1}+\dots+\frac{\beta_n^A\kappa_{2n}+(1-\epsilon_{\la{2n}})r_{2n}}{r_{2n}+\gamma+\mu}\frac{N_{2n}}{N_n^p}\tilde{v}_{ij,in}+\dots+ \frac{\beta_n^A\kappa_{nn}+\sum_{j=1}^p (1-\epsilon_{\ra{nj}})g_nm_{nj}}{g_n+\gamma+\mu}\frac{N_{nn}}{N_n^p}\tilde{v}_{ij,in}\label{zone_inf}\\
&\,+\sum_{m=1}^n(\sum_{k=1}^n\sum_{l=1}^n \frac{h_{\ra{kl},\ra{im}}+h_{\la{kl},\ra{im}}+2\alpha_{\ra{im}}}{\alpha_{\ra{im}}+\gamma+\mu})\tilde{v}_{ij,\ra{im}}+(\sum_{k=1}^n\sum_{l=1}^n \frac{h_{\ra{kl},\la{im}}+h_{\la{kl},\la{im}}+2\alpha_{\la{im}}}{\alpha_{\la{im}}+\gamma+\mu})\tilde{v}_{ij,\la{im}}\\
& =\sum_{m=1}^n (\frac{\beta_m^A\kappa_{1m}+(1-\epsilon_{\la{im}})r_{im}}{r_{im}+\gamma+\mu}\frac{N_{1m}}{N_m^p}+\dots+\frac{\beta_m^A\kappa_{nm}+(1-\epsilon_{\la{nm}})r_{nm}}{r_{nm}+\gamma+\mu}\frac{N_{nm}}{N_m^p})\tilde{v}_{ij,im}\\
&\,+\sum_{m=1}^n(\sum_{k=1}^n\sum_{l=1}^n \frac{h_{\ra{kl},\ra{im}}+h_{\la{kl},\ra{im}}+2\alpha_{\ra{im}}}{\alpha_{\ra{im}}+\gamma+\mu})\tilde{v}_{ij,\ra{im}}+(\sum_{k=1}^n\sum_{l=1}^n \frac{h_{\ra{kl},\la{im}}+h_{\la{kl},\la{im}}+2\alpha_{\la{im}}}{\alpha_{\la{im}}+\gamma+\mu})\tilde{v}_{ij,\la{im}}\\
\end{aligned}
\end{equation}

where we use $\tilde{v}(ij,kl)$ to denote the corresponding the $(ij,kl)$ entry of $V_E^{-1}$. Specifically, all $\tilde{v}(ij,kl)$ values are known once the parameters for $V_E$ are set. Considering the case where the screening cost is linearly increasing with higher effort level $\epsilon{ij}$ as well as a greater number of travelers $I_{ij}$, in order to reduce the risk of infectious disease, our objective is therefore to optimally allocate a given resource $B$ such that the maximum eigenvalue of $F(\epsilon)\tilde{V}^{-1}$ is minimized. This can be well approximated by minimizing the maximum column sum $\tilde{C}_{ij}$ through optimally resource allocation. And we can therefore express the optimal resource allocation problem for controlling the disease through urban transportation system as following:
\begin{equation}
\min_{\epsilon} \max_{ij} \tilde{C}_{ij}(\epsilon)
\end{equation}
subject to
\begin{equation}
\begin{aligned}
&\epsilon_{ij}\geq 0, \forall{i,j}\\
& c\sum_{i,j} \epsilon_{\la{ij}}r_{ij}N_{ij}+ c\sum_{i}\sum_j \epsilon_{\ra{ij}}g_im_{ij}N_{i}\leq B,\forall{i,j}
\label{eq:resource_con}
\end{aligned}
\end{equation}
Specifically, equation~\ref{eq:resource_con} denotes the resource constraint as the total screened passengers across all travel segments should not exceed the available manpower.  
As a min-max linear programming problem, the optimal resource allocation problem can be easily transferred into an equivalent linear programming problem:
\begin{equation}
\min z
\end{equation}
subject to
\begin{equation}
\begin{aligned}
& z\geq \tilde{C}_{ij}(\epsilon), \forall{i,j}\\
&\epsilon_{ij}\geq 0, \forall{i,j}\\
& c\sum_{ij,i\neq j} \epsilon_{\la{ij}}r_{ij}N_{ij}+ c\sum_{i}\sum_j \epsilon_{\ra{ij}}g_im_{ij}N_{i}\leq B,\forall{i,j}
& 
\end{aligned}
\end{equation}
The system of equation can be easily solved using commercial solvers, and for the numerical experiments we use CPLEX in Matlab to obtain the optimally allocated resources.

\section{Results}
\subsection{Experiment setting}
The numerical experiments are conducted on a desktop with @3.5 GHz CPU and $32GB$ RAM. The codes are written in Matlab and the disease trajectories are simulated using the ODE 45 solver. The numerical experiments in this study consist of two parts. Based on toy networks as shown in Figure~\ref{fig:small_network}, the first part is to compare the resulting disease dynamics with and without modeling the urban transportation system, including the differences in disease trajectories, the impact on $R_0$ values and the city-wide synchronization of the diseases. We also investigate how disease outbreaks may evolve with varying urban structures and transportation systems. The networks have 5 nodes and 25 OD pairs, meaning that each node is accessible from the other. As each OD pair may have four states (S,E,I,R), the ODE system of this test network therefore has 300 state variables and 300 equations. For the second experiment, we focus on validating the effectiveness of the Trans-SEIR model with real-world networks. We choose New York City (NYC) as the study area and verify the model outputs when compared to the reported data during the COVID-19 outbreak. The NYC network is shown in Figure~\ref{fig:nyc_net}. NYC is divided into 15 zones and the network is established following the Google transit information. The 15 zones lead to a system of 2760 state variables and 2760 equations for the ODE system. 
\begin{figure}[ht]
    \centering
    \includegraphics[width=0.8\linewidth]{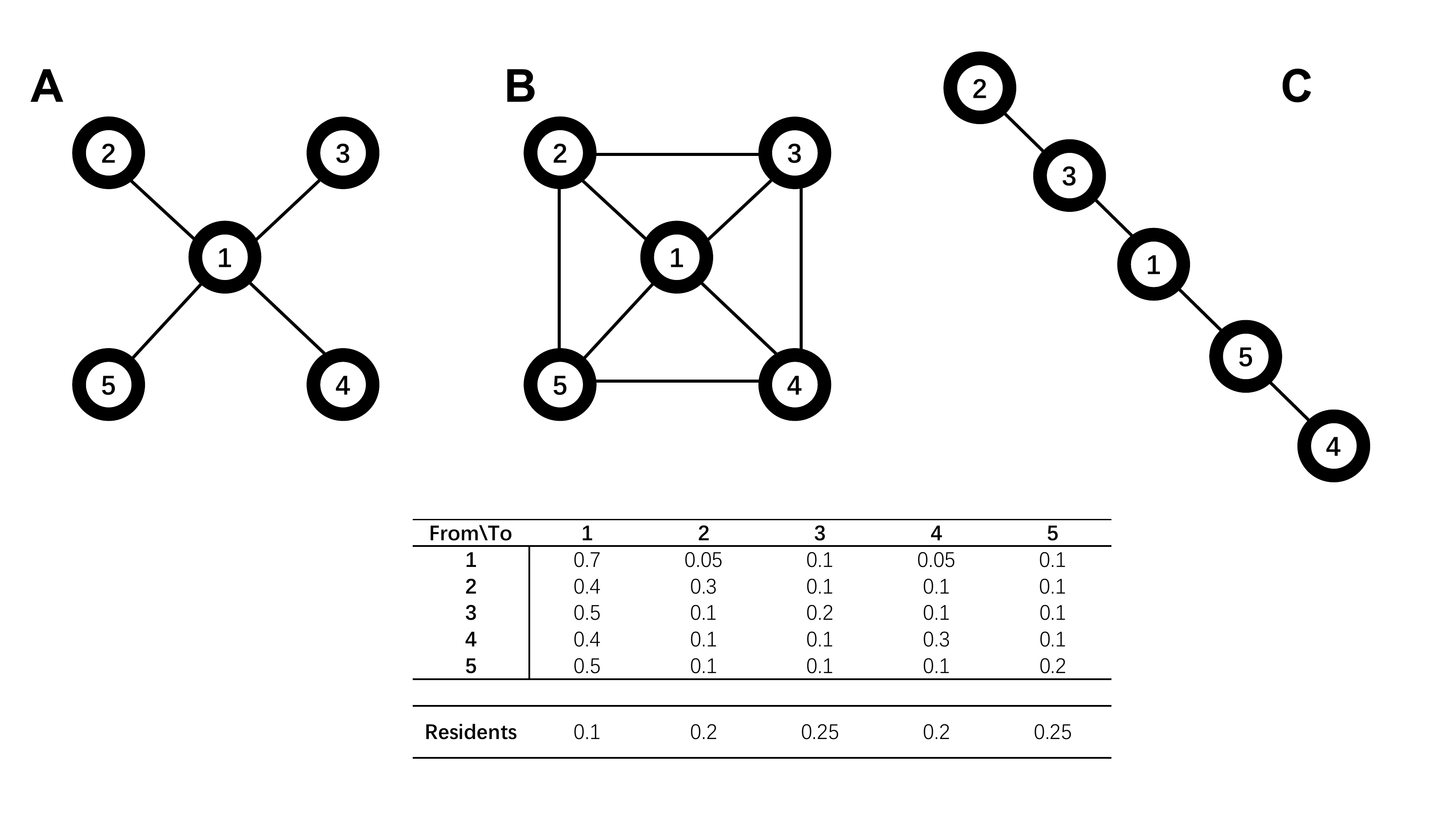}
    \caption{Network layouts of three sample networks}
    \label{fig:small_network}
\end{figure}

\subsection{Results on toy networks}
We first demonstrate the results of the Trans-SEIR model on the toy networks, and we focus on the validity of the disease threshold, the dynamics of the disease trajectories and the impacts of the travel related contagions. Figure~\ref{fig:sub_A} shows the disease of dynamics of two different scenarios in network A, where case 1 serves as the base scenario and case 2 doubles the transmission rates of those in case 1. Both cases have a total population of 10 million and start with an initial infectious population of 1,000 which are uniformly distributed among the residents of all five zones. Since case 1 has the $R_0$ value of 0.84, we observe that the number of infectious people decays exponentially fast and the system quickly moves towards the DFE state. On the contrary, case 2 has the $R_0$ value of 1.68 and exponential growth in the number of exposed and infectious populations can be immediately observed. The disease in case 2 eventually invades around 66.7\% of the total population before it dies out. While the complete trajectory was not shown here, we report that there is a strictly positive proportion of the population that remains infectious which agrees with the endemic equilibrium with $R_0>1$. For these two cases, we also compare the results with the outcomes when travel contagion is not considered (e.g., $\beta^T=0$). We can see that the two systems differ significantly and the results suggest that we may underestimate the risks of an urban disease outbreak if travel contagion is ignored, or equivalently we may overestimate the transmissibility of an infectious disease if conventional disease models are fitted to the historical data in urban areas. And the results therefore highlight the importance of considering travel contagion in transit systems for the most accurate understanding of urban disease dynamics and devising proper intervention and control strategies. Moreover, we find that, given the same population distribution and mobility patterns, the network structure of urban transit systems may have great impacts on the disease trajectories as shown in Figure~\ref{fig:three_city}. The network layouts of the three cities reflect different planning philosophies regarding the trade-off between accessibility and system efficiency. Network B represents the highest accessibility between trip ODs but also the highest operation costs with numerous direct transit routes. On the other hand, network C achieves the maximum transit usage with the minimum number of transit lines, but also leads to the highest number of passengers sharing overlapping trip hence the greatest risk of travel contagion. Based on the results of the toy network, we see that the number of secondary infections in network C may be 7.4\% higher than that of network B and 3\% higher than that of network A. As all these networks will reach the endemic equilibrium, the excessive contacts during travel in network C may increase the total infected population of network B by over 7.26\% and result in the arrival of the peak nearly 30 days earlier. And the early peak is primarily driven by the spatial synchronizations of the disease dynamics which is facilitated by the travel contagion, as shown in Figure~\ref{fig:sync}. In particular, we measure the rate of synchronization as the standard deviation of the change rates of the total infectious population of each zone. Higher std stands for the larger discrepancy among the zonal change rates and hence the disease dynamics are less synchronized. As the spread of disease usually starts with the first few external visitors in local areas, the rate of spatial synchronization therefore determines the pace and the scale of the citywide outbreak. The results in Figure~\ref{fig:sync} suggest that the disease dynamics across different zones are synchronized at an exponential rate, and the exponential decay rate for network C is 20\% higher than that without travel contagion or 10.2\% higher than that for network B, highlighting the catalyst effect of the urban transit system and travel contagion in promoting a faster pace for spatial synchronization. While urban transit systems have long been designed as an affordable mobility solution for the mass population, these results clearly assert an emerging need to rethink the planning and operation of the urban transit systems to balance the goal of efficiency maximization and the risk of infectious disease during extreme pandemics. 

\begin{figure}[ht!]
    \centering
    \subfloat[Disease dynamics of two scenarios in Network A\label{fig:sub_A}]{\includegraphics[width=0.6\linewidth]{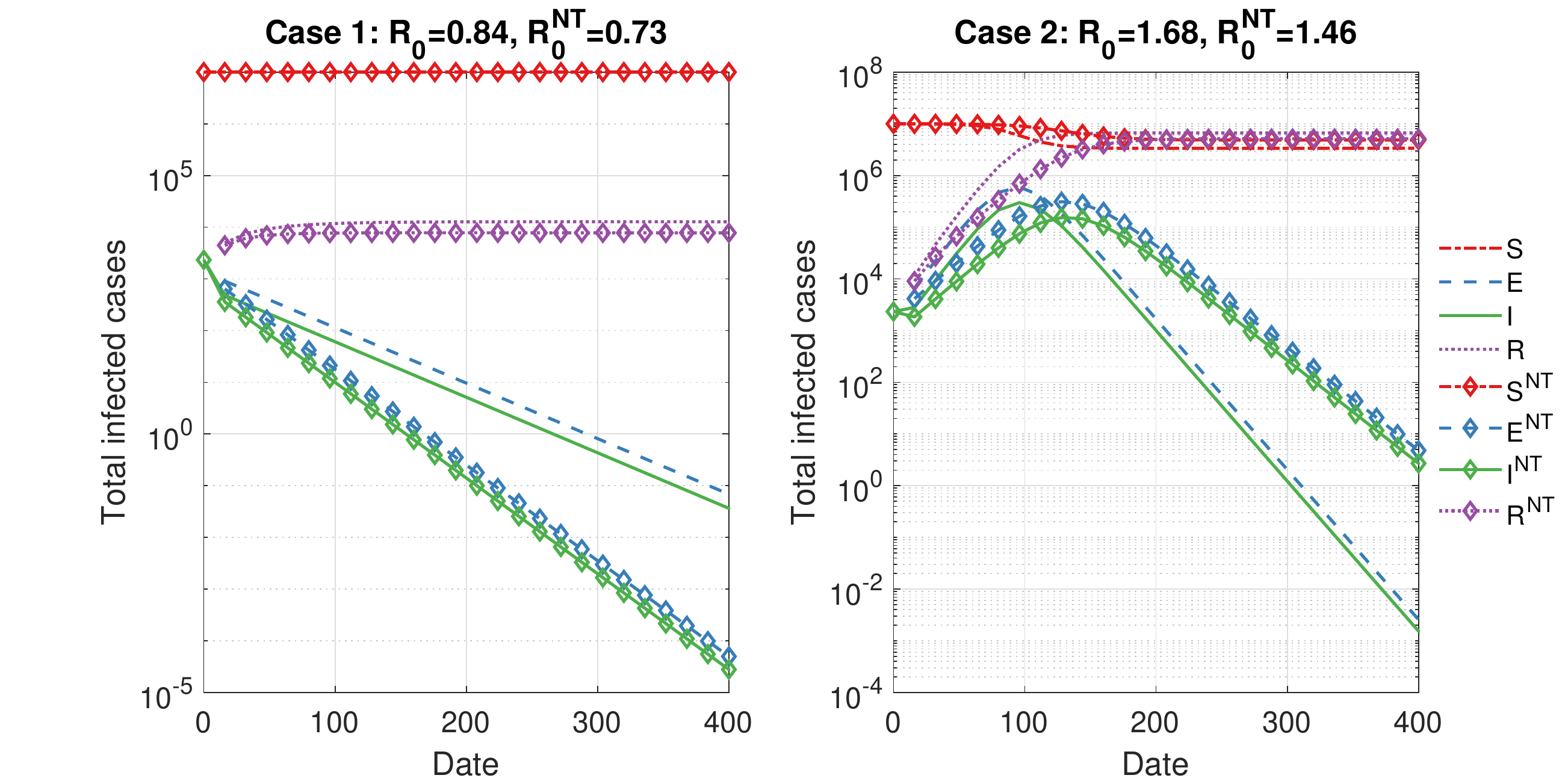}}\\
    \subfloat[Comparison of E and I curves for different cities using disease parameters in case 2 \label{fig:three_city}]{\includegraphics[width=0.55\linewidth]{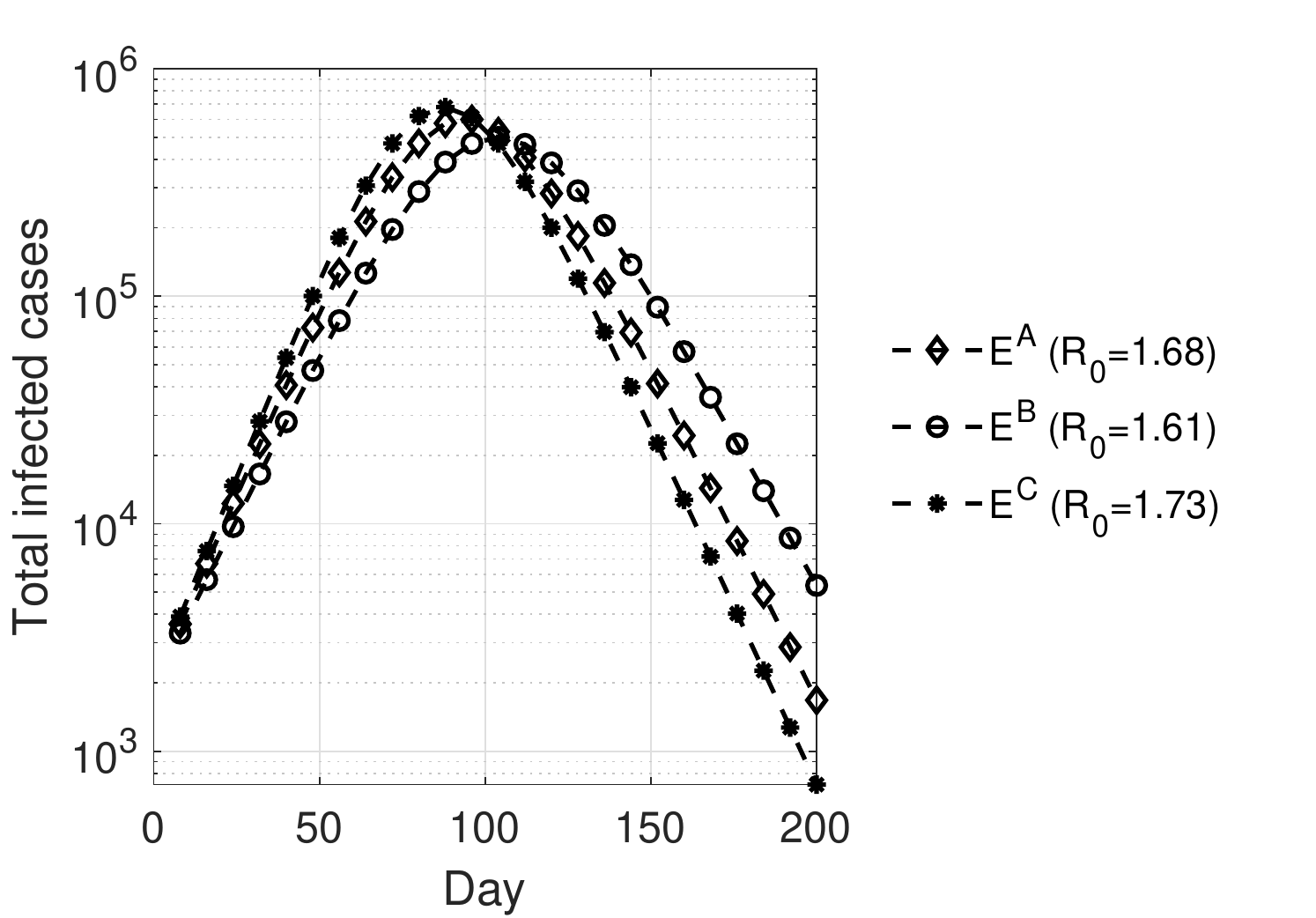}}
    \subfloat[Synchronization among different zones \label{fig:sync}]{\includegraphics[width=.4\linewidth]{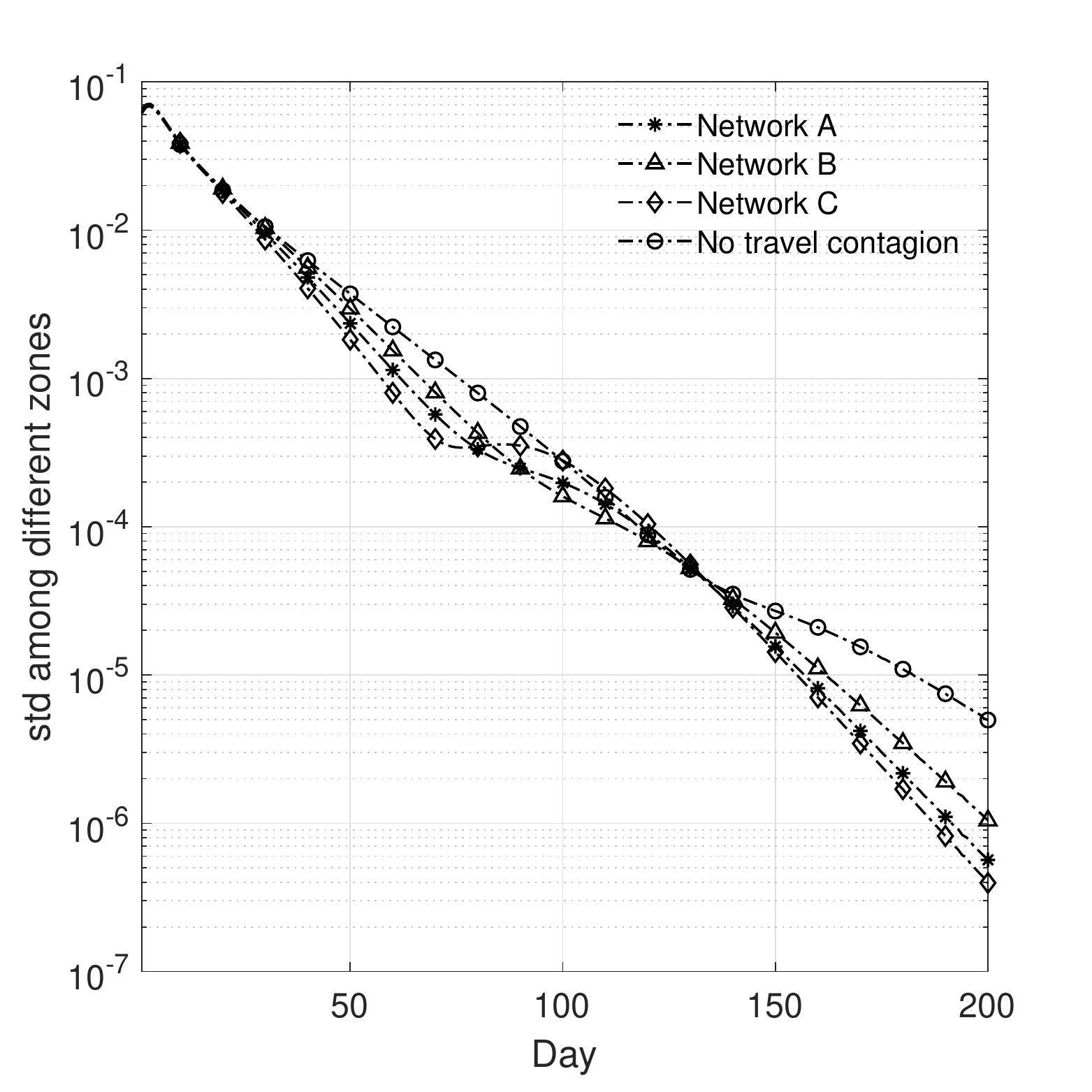}}
    \caption{Disease dynamics in two scenarios}
    \label{fig:sample_shape}
\end{figure}

\begin{figure}[ht!]
    \centering
    \includegraphics[width=\linewidth]{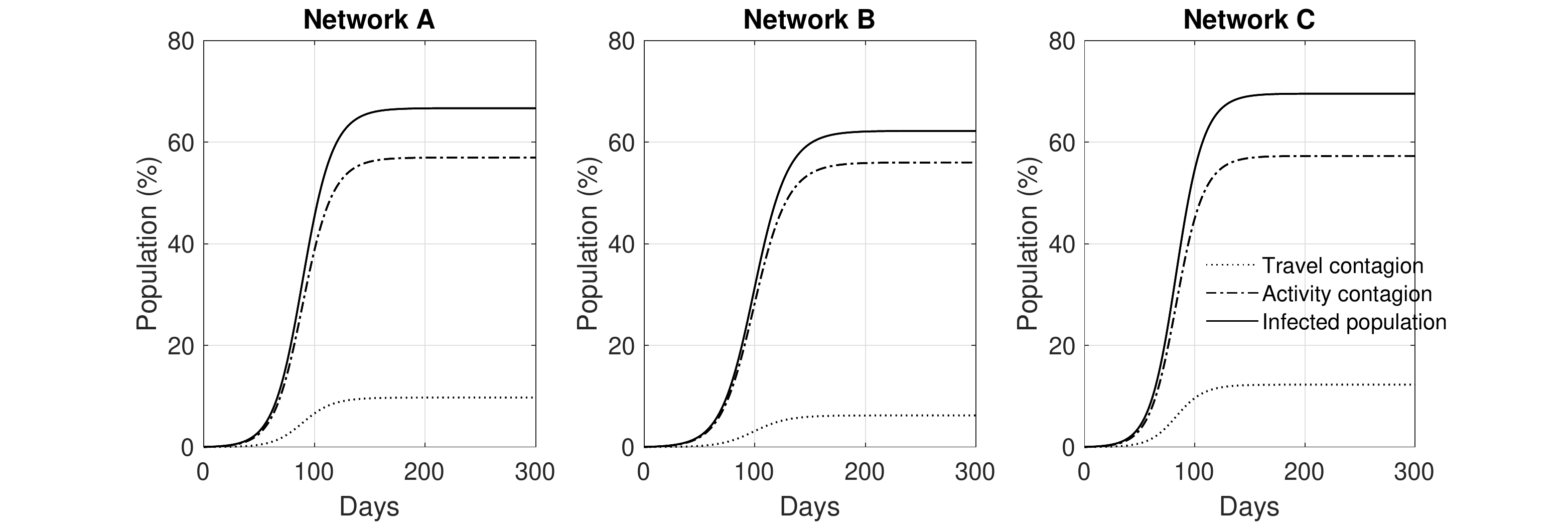}
    \caption{Cumulative dynamics of infected population, activity contagion and travel contagion in the three toy networks.}
    \label{fig:trans_dynamics}
\end{figure}

The previous discussion demonstrates the significant role played by the urban transportation system in boosting the spread of infectious diseases. Nevertheless, we further report that direct transmission during travel in the public transit systems may not constitute a major source of infections during an urban disease outbreak. To see this, we summary the cumulative dynamics of the total number of infectious (as the percentage of the total population), direct travel contagion and activity contagion, and the results can be found in Figure~\ref{fig:trans_dynamics}. It is observed that, depending on the particular network layout, direct travel contagion may account for a small proportion (10.0\% - 17.6\%) of total infected cases and the majority of the infections arise from non-travel related activity engagement such as work and entertainment. The reason is that the number of travelers in the transit system and the exposure duration are both lower than those during regular activities. While direct travel contagion is the most straightforward measure to quantify the impacts of the urban transportation system, the metric is not comprehensive as it does not capture the secondary cases produced by travel infections. To this end, we introduce the notion of induced travel contagion to measure the amount of infected population who are transmitted at activity locations by contacting people who are infected during travel. The induced travel contagion is calculated as the difference between the amount of activity contagion with and without travel infections, and the results are shown in Figure~\ref{fig:induced}. We note that the induced travel contagion in these three networks may account for an additional 4.27\% to 5.55\% of new infections. This implies that travel-related contagion may contribute to 16.9\% to 25.6\% of the total infections. Moreover, all the results are carried out based on the assumption that the transmission rate in the transit system is half of that at activity locations. We further illustrate how travel-related contagion may change by altering the travel transmission rate, and the results are shown in Figure~\ref{fig:sensitivity}. Specifically, we vary the travel transmission rate from 50\% to 250\% of the base travel transmission rate, and we observe that direct travel contagion increases linearly with respect to higher travel transmission rate, while the induced travel rate is found to be a concave function of the travel transmission rate. We further note that the linear relationship is only a local property within the evaluated range of travel transmission rate, which will turn into a concave function as we further increase the transmission rate. And the results suggest that, if we consider that the strengths of travel transmission and activity transmission being the same, the total travel-related infections may eventually account for over 35\% of the total infected cases during an urban disease outbreak. 

\begin{figure}[ht!]
    \centering
    \includegraphics[width=0.4\linewidth]{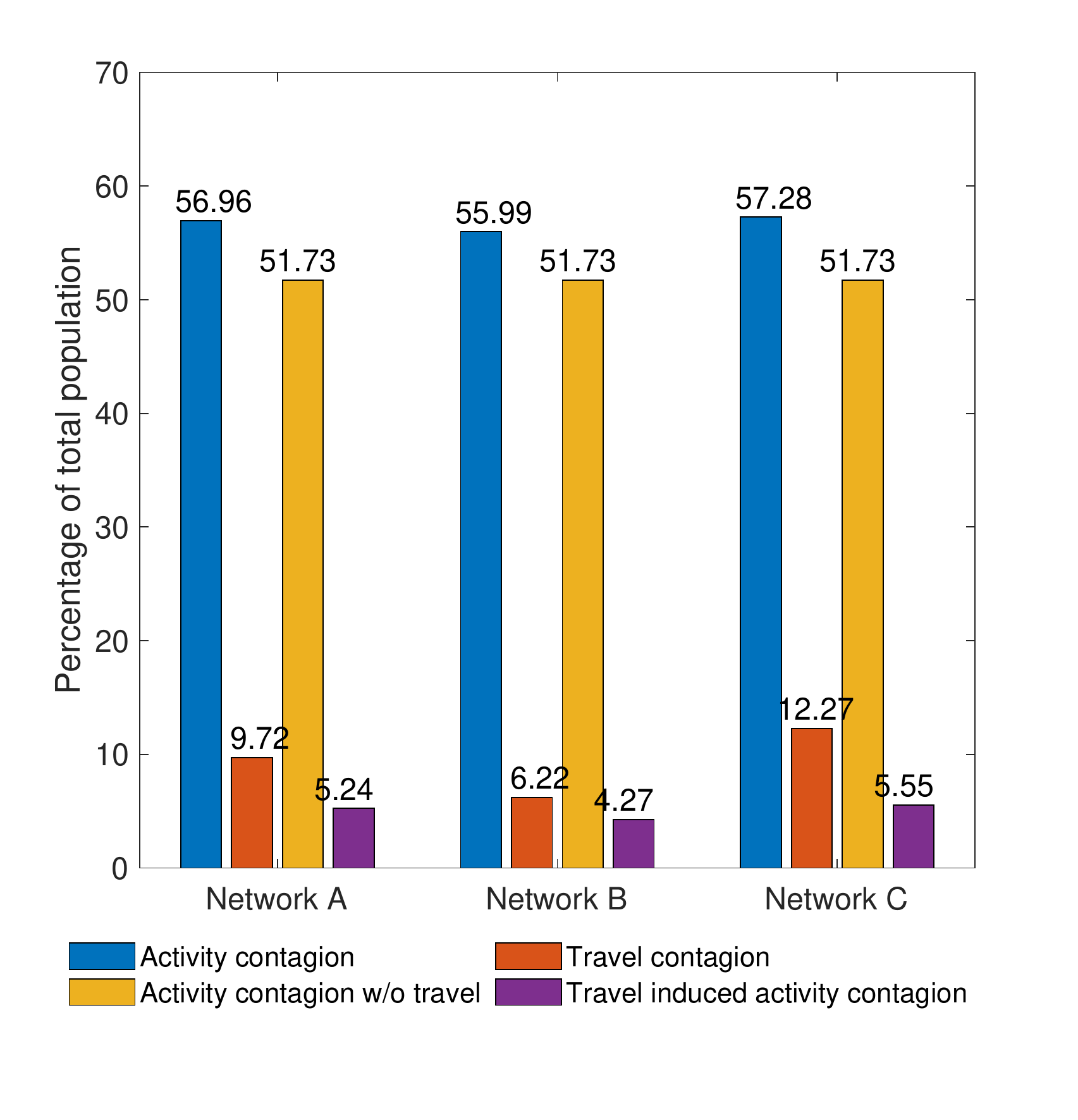}
    \caption{Number of induced travel contagions in different cities.}
    \label{fig:induced}
\end{figure}

\begin{figure}[ht!]
    \centering
    \includegraphics[width=\linewidth]{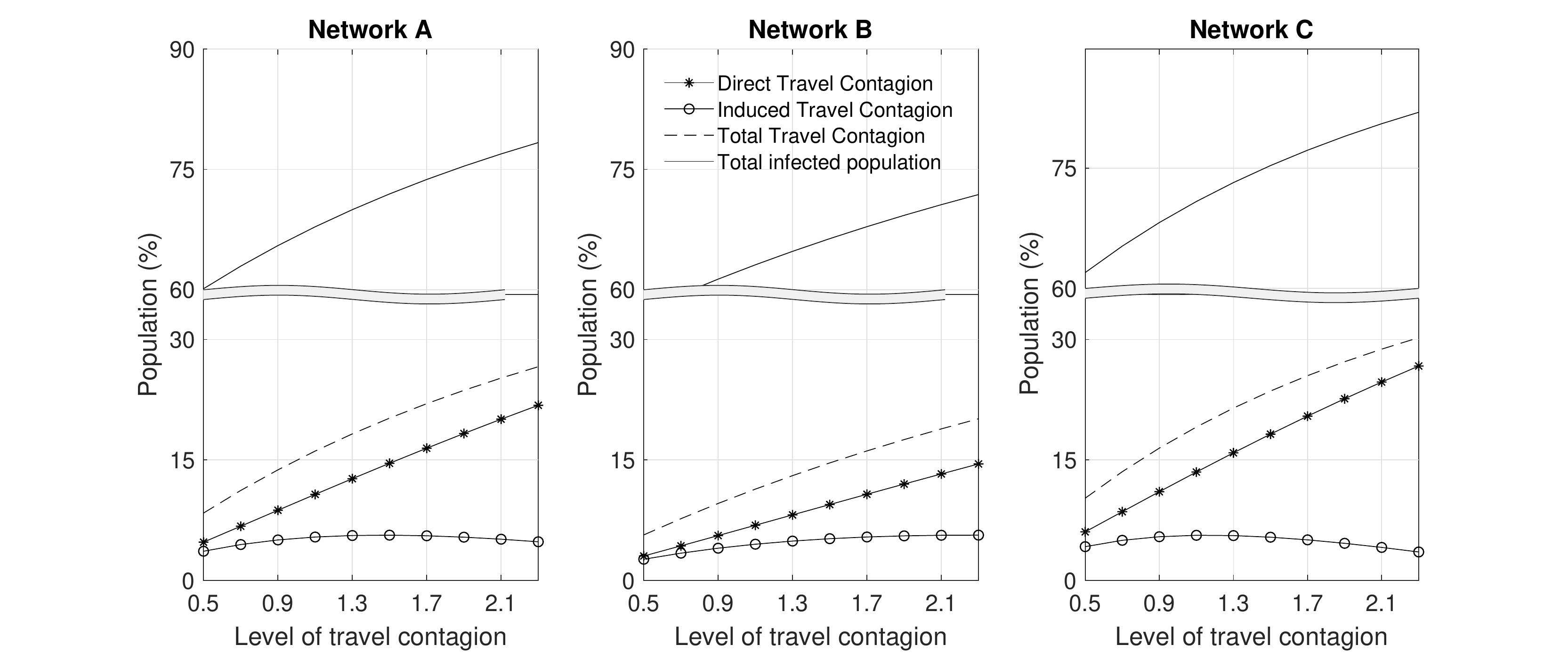}
    \caption{Sensitivity of travel contagion with respect to the change in transmission rate during travel ($\beta^T$)}
    \label{fig:sensitivity}
\end{figure}

\subsection{NYC experiments}
We use a combination of the United States 2010 census survey, 2010 NYC peripheral transportation study~\cite{nyc_periph_report2010}, NYC Neighborhood Tabulation Areas shapefile and the reported confirmed COVID-19 cases in NYC~\cite{nyc_covid_data} to prepare the data for our case study. To better account for the mobility dynamics, we set each step of the simulation as a 2-hour period and the modeling parameters are tailored to the time interval accordingly.  There are two sets of parameters to be calibrated. For the mobility dynamics, we use the 2010 census survey along with the 2010 peripheral transportation study, projected into 2020 population values, to obtain the population and the travel rate between nodes in the network. The mobility model at equilibrium estimates around 2.7 million daily commuters within NYC, and this number agrees with the value of the peripheral transportation study~\cite{nyc_periph_report2010}. For disease parameters, we set $\sigma$=$0.016$ (average duration of incubation period being 5.1 days~\cite{lauer2020incubation}) and $\gamma=0.013$ (average duration of recovery or death of 6.5 days) based on the calibrated parameters from the reported data in Hubei province, China~\cite{yang2020modified}. 

As the focus of our study is to understand the impacts of travel contagion rather than predicting the trajectories of the disease dynamics, we calibrate the disease transmission rate and the average number of contacts through a simple line search. And the parameters are finalized when the model is observed to align with the trend of the reported confirmed cases in general. The outputs from the calibrated model and its comparison with the reported confirmed cases in NYC can be seen in Figures~\ref{fig:nyc_all_trend} and~\ref{fig:nyc_borough}. In general, we find that the Trans-SEIR model may well capture the COVID-19 dynamics based on the calibrated parameters from the NYC census and travel data for daily commuting patterns. We observe that the dynamics of cumulative confirmed cases match well with the period between March 20 to April 5 when a large number of tests were performed so that the reported values were closer to the actual disease dynamics. Moreover, the model results are also consistent with the reported cases in each borough. This suggests the potential of the Trans-SEIR model for prediction tasks if more rigorous parameter calibration may be conducted. The disease dynamics is highly spatially heterogeneous. Following equation~\ref{eq:R0_value}, the model gives the $R_0$ value of 3.295 and the similar values were also reported in the literature~\cite{liu2020reproductive}. But this value only represents the average level of secondary infections over the entire NYC, and the values are significantly higher in certain boroughs. Specifically, we see that the transmission rates in Bronx, Brooklyn and Queens are over 30\% higher than that of Manhattan, despite Manhattan being the most densely populated borough. There are several possible reasons that may result in this observation. For instance, one possible explanation is the under-reporting or lack of testing in Manhattan. However, there may be other factors such as the differences in income level, level of education, age group, and the adoption of early preventative measures among the boroughs that lead to this high discrepancy. These other factors may worth further investigation.

\begin{figure}[ht!]
    \centering
    \includegraphics[width=0.6\linewidth]{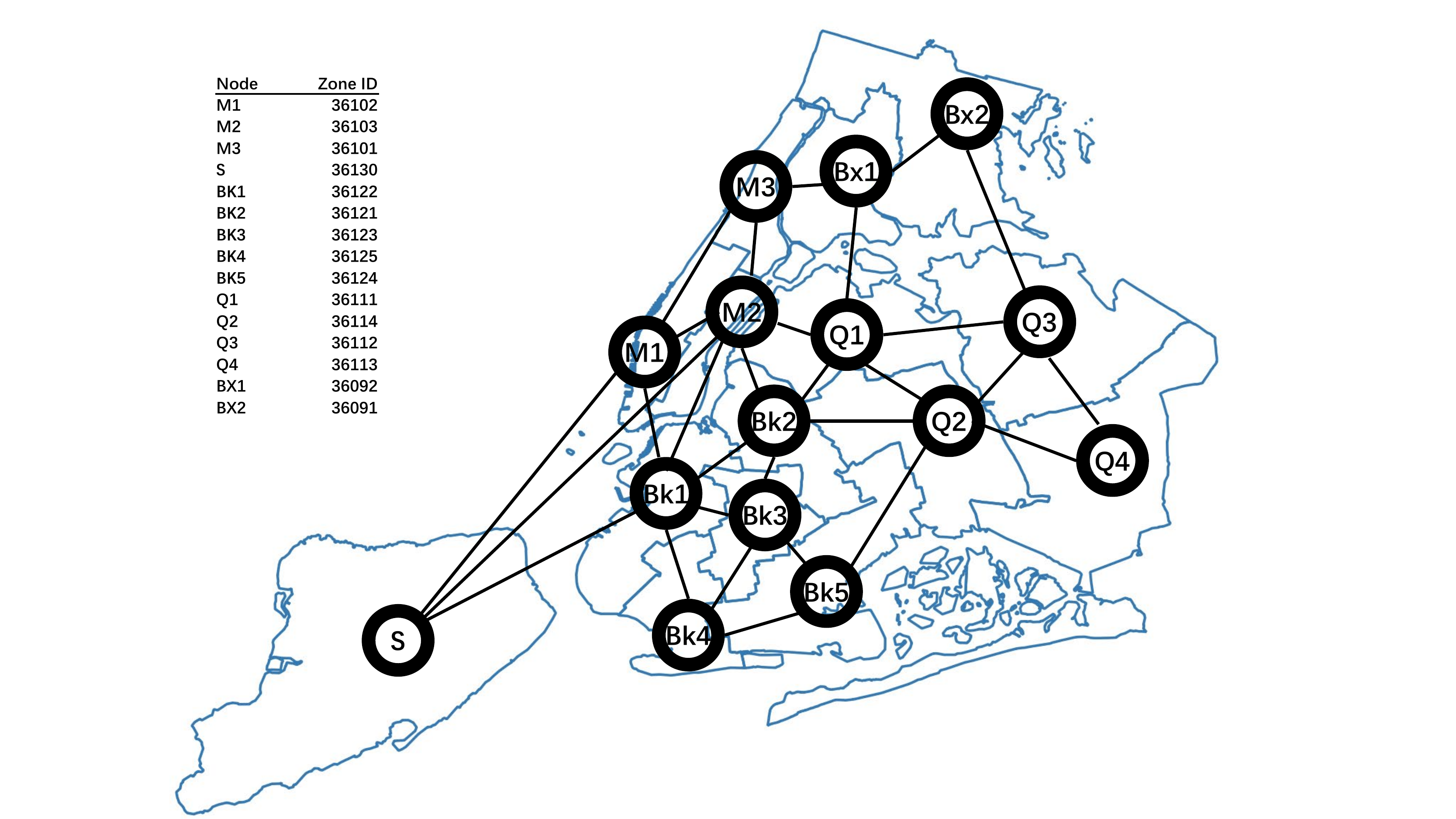}
    \caption{Network layout for the NYC case study}
    \label{fig:nyc_net}
\end{figure}

\begin{figure}[ht!]
    \centering
    \includegraphics[width=0.4\linewidth]{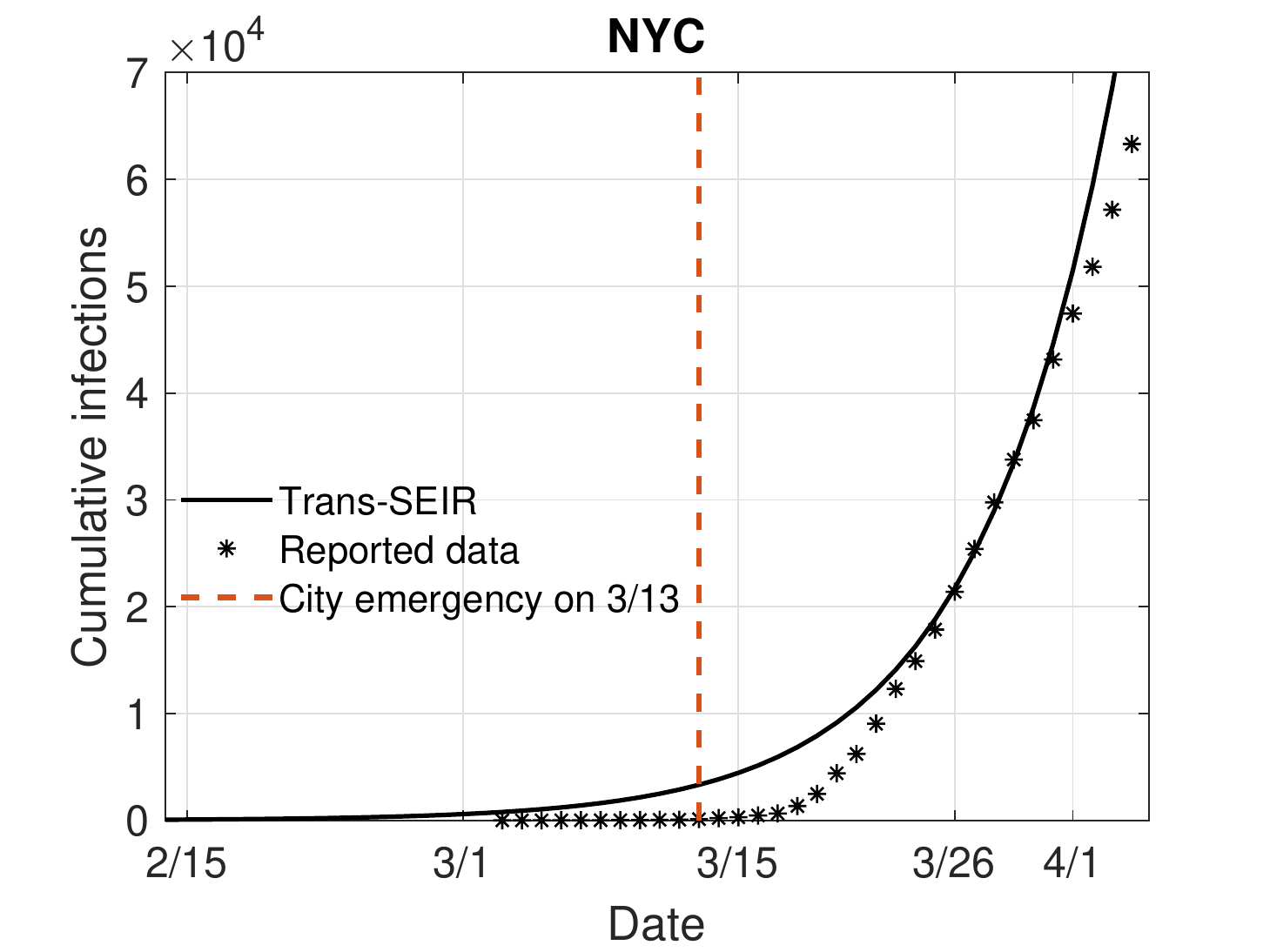}
    \caption{Fitted data for the Covid-19 case}
    \label{fig:nyc_all_trend}
\end{figure}

\begin{figure}[ht!]
    \centering
    \includegraphics[width=\linewidth]{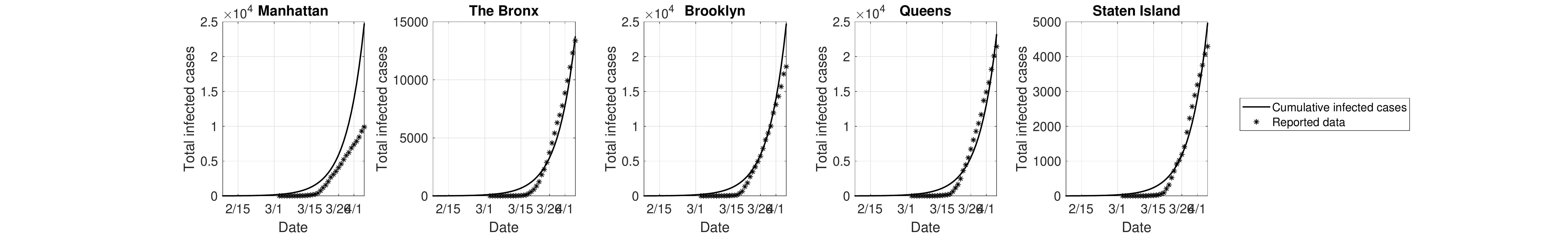}
    \caption{Fitted data for the Covid-19 case}
    \label{fig:nyc_borough}
\end{figure}

Since the Trans-SEIR model considers the spatial movement of urban travelers, the results also allow for revealing the travel segments and activity locations that are associated with the greatest number of infections and we visualize the results in Figure~\ref{fig:travel_flow}. The strong spatial heterogeneity in terms of both activity and travel contagion can be readily observed in Figure~\ref{fig:nyc_hist}, where total contagion during travel may account for 28.8\% of the total cases in NYC as of March 26 but the values differ based on the residential locations of the commuters. In particular, Staten Island is the place with the minimum travel contagion as most of the trips are within the same area and travelers are least dependent on public transit. On the contrary, the north Manhattan area has the highest ratio of travel contagion with its heavy usage of buses and metros. For commuters of the same residential location, the split ratios of travel contagion and activity contagion also vary significant based on their activity locations. Figure~\ref{fig:nyc_travel} presents the relative proportion of travelers (represented as the thickness of the flow curves) who are residents of zones in the top and get infected in the transit system while traveling to and from the bottom zones. It can be observed that east Manhattan is the activity location that is associated with the highest number of the infected population during travel. Moreover, east Manhattan is also the primary destination that contributes to the highest number of travel contagions for residents of each zone except for Staten Island where most of the commuting trips take place within the same area. The major reasons behind this observation are threefold: east Manhattan is the major trip attractor for other areas in NYC, transit such as bus and metro are the major modes for commuting to the area, and long commuting trips are expected for people living outside the Manhattan area. These lead to both high population density and prolonged contact duration for the transit routes linking to east Manhattan which significantly exacerbates the contagion risk level. Besides, we find that west \& lower Manhattan and Ft. Greene/Bay of Brooklyn are the other two major destination locations for infections during travel. The source of travel contagions who arrive at west \& lower Manhattan are again the commuters from outside Manhattan. But the corresponding source of travelers who arrive at Ft. Greene/Bay of Brooklyn are primarily the residents within the Brooklyn area, and similar patterns are observed for the other activity locations. The distribution of activity contagion largely resembles that of travel contagion as shown in Figure~\ref{fig:nyc_act}. The major difference is that activity locations are mostly observed within the same location for areas such as Ft.Greene of Brooklyn, northwest Queens and the Bronx. The reason is that majority of the residents in these locations also have activity locations in the same areas, and trips within these areas are mostly made by private vehicles so that travel contagion is not the major threat. Based on these findings, we arrive at a better understanding of the spatial trajectories of the disease dynamics where the east and west \& lower Manhattan areas are likely the intermediate stops before the disease reaches every corner of NYC. This also hints at the potential locations where entrance screening may be placed to mitigate the spread of the diseases, which require further validation based on the results of the optimal control problem. 

\begin{figure}[htbp!]
    \centering
    \subfloat[Number of travel contagion and activity contagion in each area\label{fig:nyc_hist}]{\includegraphics[width=0.8\linewidth]{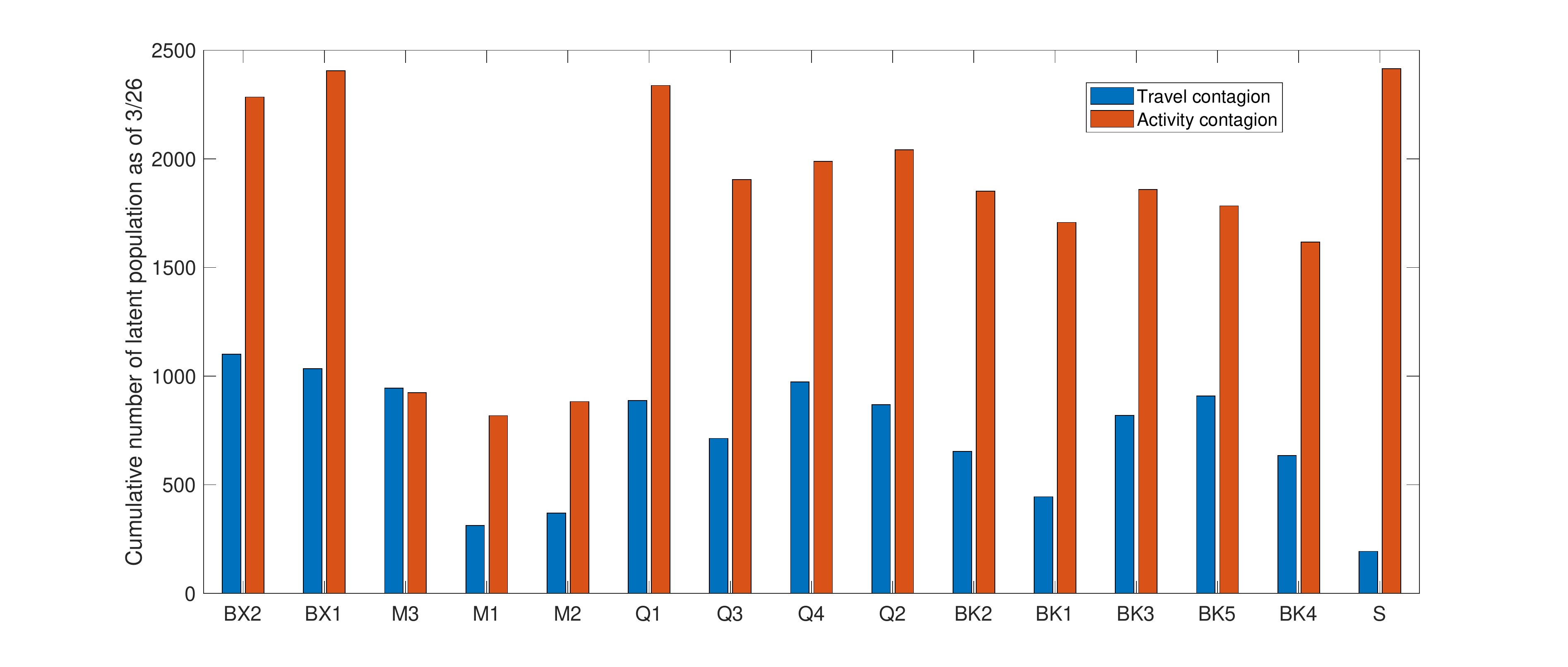}}\\
    \subfloat[Alluvial flow diagram for the amount of travel contagion between pairs of zones.\label{fig:nyc_travel}]{\includegraphics[angle=-90,width=0.8\linewidth]{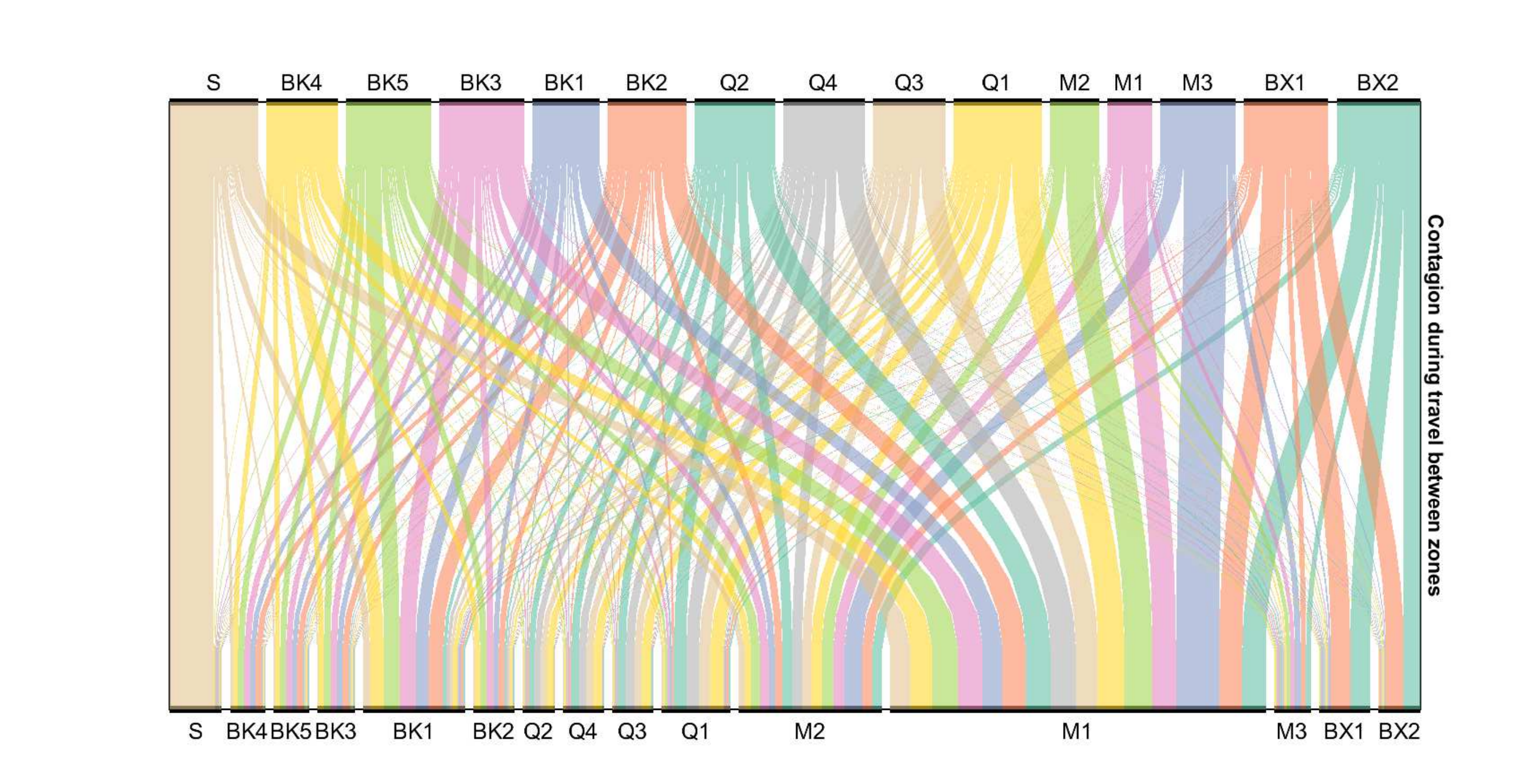}}\\
    \subfloat[Alluvial flow diagram for the amount of activity contagion between pairs of zones.\label{fig:nyc_act}]{\includegraphics[angle=-90,width=0.8\linewidth]{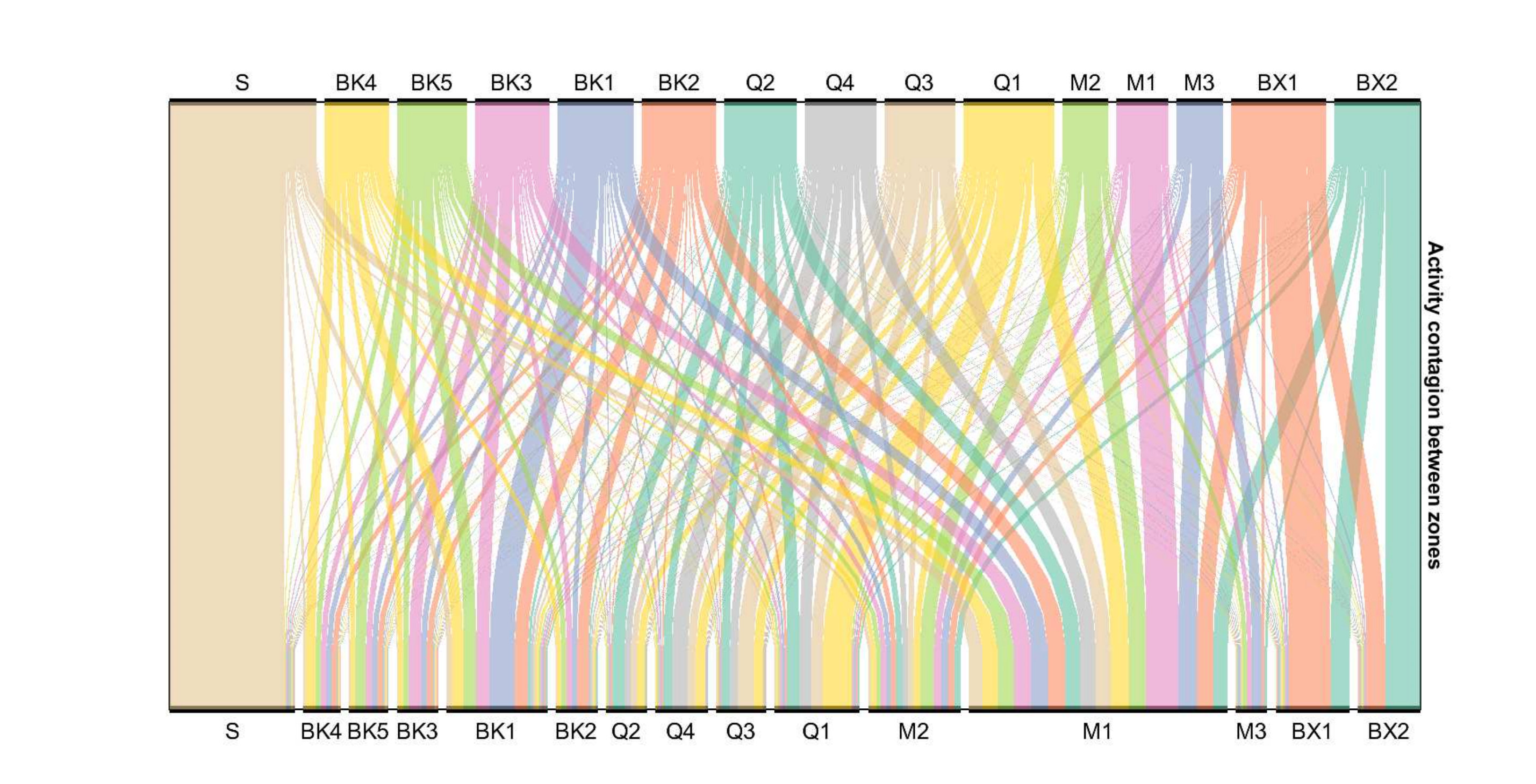}}
    \caption{The distributions of travel contagion and activity contagion during the COVID-19 outbreak in NYC, measured by the amount of people in latent class (E) as of March 26. }
    \label{fig:travel_flow}
\end{figure}

\begin{figure}[ht!]
    \centering
    \includegraphics[width=0.4\linewidth]{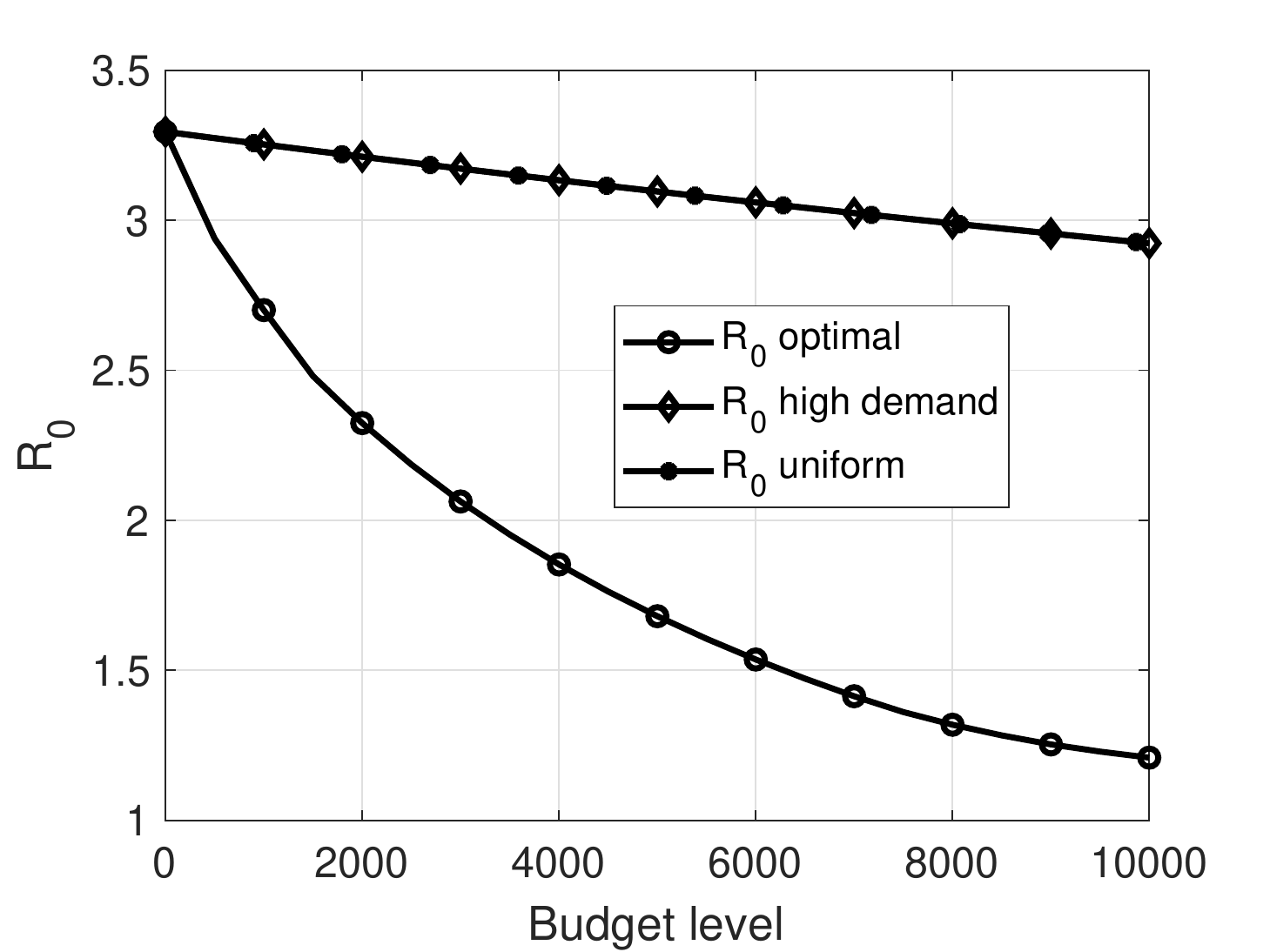}
    \caption{Effectiveness of transportation control with different budget level}
    \label{fig:nyc_control}
\end{figure}

\begin{figure}[htbp!]
    \centering
    \includegraphics[width=0.8\linewidth]{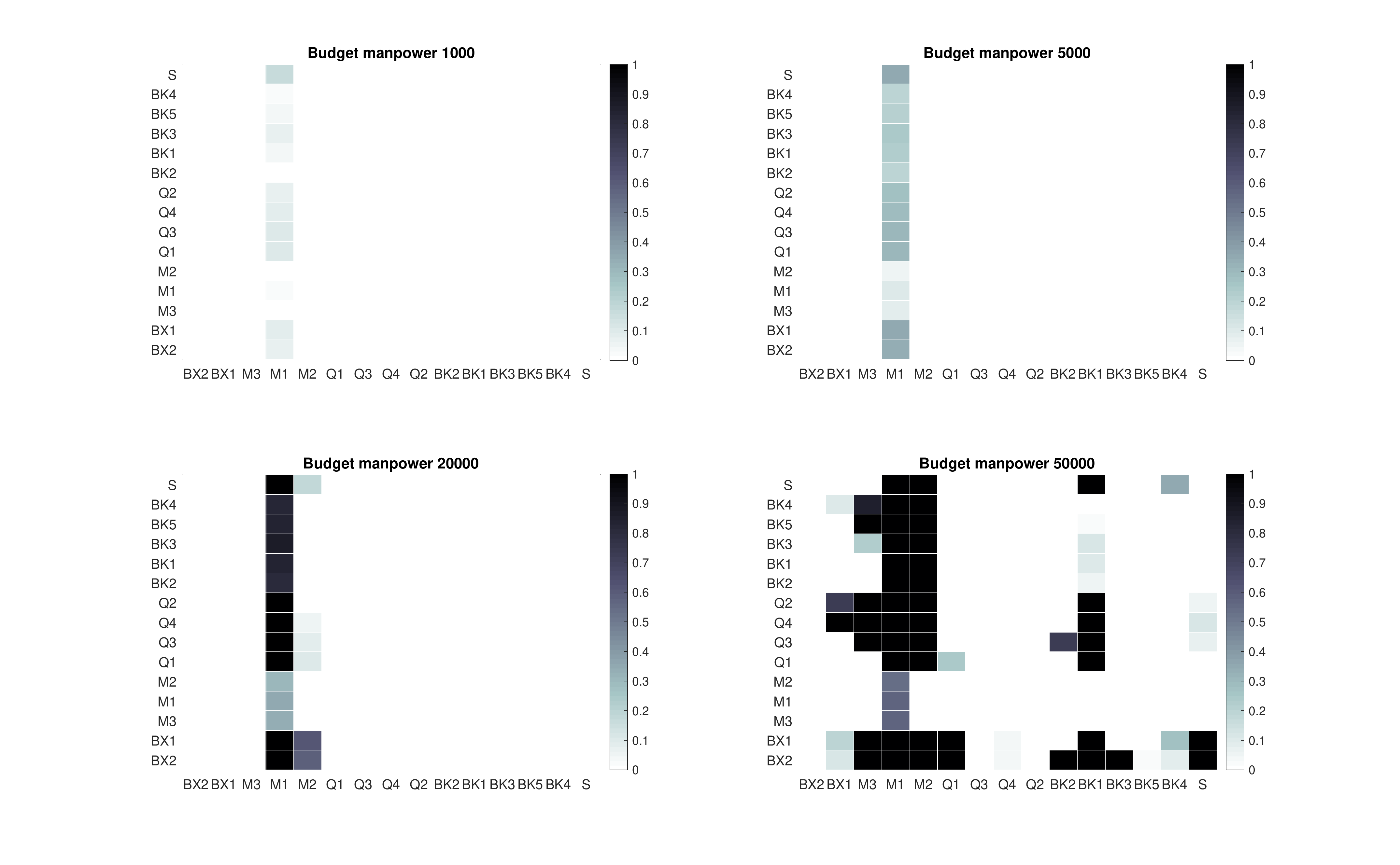}
    \caption{Distribution of control resources under different budget level}
    \label{fig:nyc_budget_allocation}
\end{figure}

We summarize the results of optimal entrance control in Figures~\ref{fig:nyc_control} and~\ref{fig:nyc_budget_allocation}. We report that entrance control for urban transit system can be a highly effective countermeasure against the outbreak of infectious diseases in urban areas, for both preventative and reactive purposes. If limited resources are distributed in an optimal manner, we are able to curb the disease and drive it to the disease-free equilibrium state, and will contribute to a significant reduction in the number of total infections. To validate the proposed solution approach, we compare the effectiveness among optimal control as devised in our study and two other benchmark methods: the uniform control where equal resources are distributed among all travel segments and the high-demand control where travel segments with higher demand are prioritized for resources. The measure of effectiveness is the $R_0$ value after control as shown in Figure~\ref{fig:nyc_control}, where the x-axis denotes the number of resources spent (manpower used for screening travelers). We observe that, by only focusing on medium and high capacity modes and if we can distribute the resources optimally, we are able to lower the $R_0$ to 1.2 by screening approximately 10,000 daily travelers per time step, or less than 4.5\% of total daily commuters. While the screening of 10,000 travelers can be expensive in practice and may lead to excessive externalities such as heavy delay, we also find that an exponential rate of reduction in $R_0$ can be achieved with optimal control strategy for a small number of initial efforts. If we may perform the screening of 2000 travelers with a success rate of 70\%, we will be able to lower the $R_0$ in NYC to 2.32 which represents a 29.6\% reduction in the number of secondary infections and implies significant potential saving of lives and medical resources.  Note that this level of control effectiveness is achieved through the compartment model and control strategy being implemented at the aggregate level, an encouraging finding is that the outcomes of the optimal control approach are comparable to that of the target control strategy at the individual level with perfect information on the contact network structure and the state of each travelers~\cite{qian2020scaling}.  Meanwhile, the other two benchmark approaches are found to be barely effective. Despite direct and induced travel contagions, we need to pay attention to the case that infectious people will need to travel using the urban transit system to further spread the disease. If the resources and screening manpower can be optimally allocated, we can effectively identify these infectious travelers, put them into quarantine and stop them from producing new infections. To this end, we also visualize the optimal distribution of resources across the travel segments as shown in Figure~\ref{fig:nyc_budget_allocation}. The major findings echo to the general observations in Figure~\ref{fig:nyc_all_trend}, where east Manhattan is the target area to place the majority of the resources and should be exclusively controlled with very limited resources (budget<5000). Moreover, for commuters with east Manhattan as the activity location, the optimal solution suggests targeting those from Staten Island, Queens and the Bronx with top priority. This can help to prevent infectious travelers from resulting in secondary cases within their areas where massive activity contagions are observed. With more resources available, the optimal solution recommends focusing on residents of the above-mentioned areas who have their activity locations in west \& and lower Manhattan and part of the Brooklyn area, before placing stricter screening for residents within the Manhattan area due to the relatively low activity contagion rates. In conclusion, the results of the optimal control method suggest a highly promising direction for the control of urban transportation systems in densely populated areas. And the Trans-SEIR model with the travel control approach may have significant implications towards the operation of public transit systems after the reopening of businesses to mitigate further risks of the COVID-19 outbreak.

\section{Conclusion}

In this study, a realistic Trans-SEIR model is presented for understanding the spread of infectious disease in urban areas considering the spatial-varying mobility dynamics and the presence of both travel contagions and activity contagions. The Trans-SEIR model starts with the dynamic system to capture the population movement among different zones and derive the equilibrium flow pattern for the urban areas. Based on the stable commuting pattern, the spread of infectious disease in urban areas is formulated as a spatial SEIR system with travel contagion, and an entrance control framework for urban transportation is further proposed for the optimal allocation of limited resources for reducing the risks from the infectious diseases. The presented modeling approach along with the optimal control framework will allow for advances in understanding the disease dynamics in urban areas and contribute to the framing controlling strategies and policies for mitigating the risk of infectious diseases from the urban transportation perspective. 

\bibliographystyle{model1-num-names}
\bibliography{references}

\begin{thebibliography}{24}
\expandafter\ifx\csname natexlab\endcsname\relax\def\natexlab#1{#1}\fi
\providecommand{\bibinfo}[2]{#2}
\ifx\xfnm\relax \def\xfnm[#1]{\unskip,\space#1}\fi
\bibitem[{{Department of Economic and Social Affairs, United
  Nations}(2018)}]{wordlpopu18}
\bibinfo{author}{{Department of Economic and Social Affairs, United Nations}},
\newblock \bibinfo{title}{68\% of the world population projected to live in
  urban areas by 2050, says un}  (\bibinfo{year}{2018}).
  \bibinfo{note}{\href{https://www.un.org/development/desa/en/news/population/2018-revision-of-world-urbanization-prospects.html}{[Accessed
  Feb, 2020]}}.
\bibitem[{{Richard Florida, CITYLAB.}(2018)}]{urban_trans_2018}
\bibinfo{author}{{Richard Florida, CITYLAB.}},
\newblock \bibinfo{title}{The global mass transit revolution}
  (\bibinfo{year}{2018}).
  \bibinfo{note}{\href{https://www.citylab.com/transportation/2018/09/the-global-mass-transit-revolution/570883/}{[Accessed
  Feb, 2020]}}.
\bibitem[{Chambers et~al.(2015)Chambers, Goworowska, and
  Smith}]{chambers2015passenger}
\bibinfo{author}{M.~Chambers}, \bibinfo{author}{J.~Goworowska},
  \bibinfo{author}{S.~Smith},
\newblock \bibinfo{title}{Passenger travel facts and figures 2015}
  (\bibinfo{year}{2015}).
\bibitem[{Dong et~al.(2020)Dong, Du, and Gardner}]{dong2020interactive}
\bibinfo{author}{E.~Dong}, \bibinfo{author}{H.~Du},
  \bibinfo{author}{L.~Gardner},
\newblock \bibinfo{title}{An interactive web-based dashboard to track covid-19
  in real time},
\newblock \bibinfo{journal}{The Lancet infectious diseases}
  (\bibinfo{year}{2020}).
\bibitem[{Kermack and McKendrick(1927)}]{kermack1927contribution}
\bibinfo{author}{W.~O. Kermack}, \bibinfo{author}{A.~G. McKendrick},
\newblock \bibinfo{title}{A contribution to the mathematical theory of
  epidemics},
\newblock in: \bibinfo{booktitle}{Proceedings of the Royal Society of London A:
  mathematical, physical and engineering sciences}, volume
  \bibinfo{volume}{115}, \bibinfo{organization}{The Royal Society}, pp.
  \bibinfo{pages}{700--721}.
\bibitem[{Li and Muldowney(1995)}]{li1995global}
\bibinfo{author}{M.~Y. Li}, \bibinfo{author}{J.~S. Muldowney},
\newblock \bibinfo{title}{Global stability for the seir model in epidemiology},
\newblock \bibinfo{journal}{Mathematical biosciences} \bibinfo{volume}{125}
  (\bibinfo{year}{1995}) \bibinfo{pages}{155--164}.
\bibitem[{Smith et~al.(2001)Smith, Wang, and Li}]{smith2001global}
\bibinfo{author}{H.~L. Smith}, \bibinfo{author}{L.~Wang},
  \bibinfo{author}{M.~Y. Li},
\newblock \bibinfo{title}{Global dynamics of an seir epidemic model with
  vertical transmission},
\newblock \bibinfo{journal}{SIAM Journal on Applied Mathematics}
  \bibinfo{volume}{62} (\bibinfo{year}{2001}) \bibinfo{pages}{58--69}.
\bibitem[{Castillo-Chavez et~al.(1989)Castillo-Chavez, Hethcote, Andreasen,
  Levin, and Liu}]{castillo1989epidemiological}
\bibinfo{author}{C.~Castillo-Chavez}, \bibinfo{author}{H.~W. Hethcote},
  \bibinfo{author}{V.~Andreasen}, \bibinfo{author}{S.~A. Levin},
  \bibinfo{author}{W.~M. Liu},
\newblock \bibinfo{title}{Epidemiological models with age structure,
  proportionate mixing, and cross-immunity},
\newblock \bibinfo{journal}{Journal of mathematical biology}
  \bibinfo{volume}{27} (\bibinfo{year}{1989}) \bibinfo{pages}{233--258}.
\bibitem[{Shulgin et~al.(1998)Shulgin, Stone, and Agur}]{shulgin1998pulse}
\bibinfo{author}{B.~Shulgin}, \bibinfo{author}{L.~Stone},
  \bibinfo{author}{Z.~Agur},
\newblock \bibinfo{title}{Pulse vaccination strategy in the sir epidemic
  model},
\newblock \bibinfo{journal}{Bulletin of mathematical biology}
  \bibinfo{volume}{60} (\bibinfo{year}{1998}) \bibinfo{pages}{1123--1148}.
\bibitem[{d’Onofrio(2005)}]{d2005pulse}
\bibinfo{author}{A.~d’Onofrio},
\newblock \bibinfo{title}{On pulse vaccination strategy in the sir epidemic
  model with vertical transmission},
\newblock \bibinfo{journal}{Applied Mathematics Letters} \bibinfo{volume}{18}
  (\bibinfo{year}{2005}) \bibinfo{pages}{729--732}.
\bibitem[{Hethcote et~al.(1981)Hethcote, Stech, and van~den
  Driessche}]{hethcote1981periodicity}
\bibinfo{author}{H.~W. Hethcote}, \bibinfo{author}{H.~W. Stech},
  \bibinfo{author}{P.~van~den Driessche},
\newblock \bibinfo{title}{Periodicity and stability in epidemic models: a
  survey},
\newblock \bibinfo{journal}{Differential Equations and Applications in Ecology,
  Epidemics and Population Problems, Academic Press, New York}
  (\bibinfo{year}{1981}) \bibinfo{pages}{65--82}.
\bibitem[{Liu et~al.(1987)Liu, Hethcote, and Levin}]{liu1987dynamical}
\bibinfo{author}{W.-m. Liu}, \bibinfo{author}{H.~W. Hethcote},
  \bibinfo{author}{S.~A. Levin},
\newblock \bibinfo{title}{Dynamical behavior of epidemiological models with
  nonlinear incidence rates},
\newblock \bibinfo{journal}{Journal of mathematical biology}
  \bibinfo{volume}{25} (\bibinfo{year}{1987}) \bibinfo{pages}{359--380}.
\bibitem[{Newman(2002)}]{newman2002spread}
\bibinfo{author}{M.~E. Newman},
\newblock \bibinfo{title}{Spread of epidemic disease on networks},
\newblock \bibinfo{journal}{Physical review E} \bibinfo{volume}{66}
  (\bibinfo{year}{2002}) \bibinfo{pages}{016128}.
\bibitem[{Meyers(2007)}]{meyers2007contact}
\bibinfo{author}{L.~Meyers},
\newblock \bibinfo{title}{Contact network epidemiology: Bond percolation
  applied to infectious disease prediction and control},
\newblock \bibinfo{journal}{Bulletin of the American Mathematical Society}
  \bibinfo{volume}{44} (\bibinfo{year}{2007}) \bibinfo{pages}{63--86}.
\bibitem[{Danon et~al.(2011)Danon, Ford, House, Jewell, Keeling, Roberts, Ross,
  and Vernon}]{danon2011networks}
\bibinfo{author}{L.~Danon}, \bibinfo{author}{A.~P. Ford},
  \bibinfo{author}{T.~House}, \bibinfo{author}{C.~P. Jewell},
  \bibinfo{author}{M.~J. Keeling}, \bibinfo{author}{G.~O. Roberts},
  \bibinfo{author}{J.~V. Ross}, \bibinfo{author}{M.~C. Vernon},
\newblock \bibinfo{title}{Networks and the epidemiology of infectious disease},
\newblock \bibinfo{journal}{Interdisciplinary perspectives on infectious
  diseases} \bibinfo{volume}{2011} (\bibinfo{year}{2011}).
\bibitem[{Lloyd and May(1996)}]{lloyd1996spatial}
\bibinfo{author}{A.~L. Lloyd}, \bibinfo{author}{R.~M. May},
\newblock \bibinfo{title}{Spatial heterogeneity in epidemic models},
\newblock \bibinfo{journal}{Journal of theoretical biology}
  \bibinfo{volume}{179} (\bibinfo{year}{1996}) \bibinfo{pages}{1--11}.
\bibitem[{Sattenspiel and Dietz(1995)}]{sattenspiel1995structured}
\bibinfo{author}{L.~Sattenspiel}, \bibinfo{author}{K.~Dietz},
\newblock \bibinfo{title}{A structured epidemic model incorporating geographic
  mobility among regions},
\newblock \bibinfo{journal}{Mathematical biosciences} \bibinfo{volume}{128}
  (\bibinfo{year}{1995}) \bibinfo{pages}{71--91}.
\bibitem[{Van~den Driessche and Watmough(2002)}]{van2002reproduction}
\bibinfo{author}{P.~Van~den Driessche}, \bibinfo{author}{J.~Watmough},
\newblock \bibinfo{title}{Reproduction numbers and sub-threshold endemic
  equilibria for compartmental models of disease transmission},
\newblock \bibinfo{journal}{Mathematical biosciences} \bibinfo{volume}{180}
  (\bibinfo{year}{2002}) \bibinfo{pages}{29--48}.
\bibitem[{{NYC Department of City Planning - Transportation
  Division}(2010)}]{nyc_periph_report2010}
\bibinfo{author}{{NYC Department of City Planning - Transportation Division}},
\newblock \bibinfo{title}{Peripheral travel study.}  (\bibinfo{year}{2010}).
  \bibinfo{note}{\href{https://www1.nyc.gov/assets/planning/download/pdf/plans/transportation/peripheral_travel_toc.pdf}{[Accessed
  March, 2020]}}.
\bibitem[{{NYC Health Department}(2020)}]{nyc_covid_data}
\bibinfo{author}{{NYC Health Department}},
\newblock \bibinfo{title}{{COVID-19}: Data}  (\bibinfo{year}{2020}).
  \bibinfo{note}{\href{https://www1.nyc.gov/site/doh/covid/covid-19-data.page}{[Accessed
  April, 2020]}}.
\bibitem[{Lauer et~al.(2020)Lauer, Grantz, Bi, Jones, Zheng, Meredith, Azman,
  Reich, and Lessler}]{lauer2020incubation}
\bibinfo{author}{S.~A. Lauer}, \bibinfo{author}{K.~H. Grantz},
  \bibinfo{author}{Q.~Bi}, \bibinfo{author}{F.~K. Jones},
  \bibinfo{author}{Q.~Zheng}, \bibinfo{author}{H.~R. Meredith},
  \bibinfo{author}{A.~S. Azman}, \bibinfo{author}{N.~G. Reich},
  \bibinfo{author}{J.~Lessler},
\newblock \bibinfo{title}{The incubation period of coronavirus disease 2019
  (covid-19) from publicly reported confirmed cases: estimation and
  application},
\newblock \bibinfo{journal}{Annals of internal medicine}
  (\bibinfo{year}{2020}).
\bibitem[{Yang et~al.(2020)Yang, Zeng, Wang, Wong, Liang, Zanin, Liu, Cao, Gao,
  Mai et~al.}]{yang2020modified}
\bibinfo{author}{Z.~Yang}, \bibinfo{author}{Z.~Zeng},
  \bibinfo{author}{K.~Wang}, \bibinfo{author}{S.-S. Wong},
  \bibinfo{author}{W.~Liang}, \bibinfo{author}{M.~Zanin},
  \bibinfo{author}{P.~Liu}, \bibinfo{author}{X.~Cao}, \bibinfo{author}{Z.~Gao},
  \bibinfo{author}{Z.~Mai}, et~al.,
\newblock \bibinfo{title}{Modified seir and ai prediction of the epidemics
  trend of covid-19 in china under public health interventions},
\newblock \bibinfo{journal}{Journal of Thoracic Disease} \bibinfo{volume}{12}
  (\bibinfo{year}{2020}) \bibinfo{pages}{165}.
\bibitem[{Liu et~al.(2020)Liu, Gayle, Wilder-Smith, and
  Rockl{\"o}v}]{liu2020reproductive}
\bibinfo{author}{Y.~Liu}, \bibinfo{author}{A.~A. Gayle},
  \bibinfo{author}{A.~Wilder-Smith}, \bibinfo{author}{J.~Rockl{\"o}v},
\newblock \bibinfo{title}{The reproductive number of covid-19 is higher
  compared to sars coronavirus},
\newblock \bibinfo{journal}{Journal of travel medicine}
  (\bibinfo{year}{2020}).
\bibitem[{Qian et~al.(2020)Qian, Sun, and Ukkusuri}]{qian2020scaling}
\bibinfo{author}{X.~Qian}, \bibinfo{author}{L.~Sun}, \bibinfo{author}{S.~V.
  Ukkusuri},
\newblock \bibinfo{title}{Scaling of contact networks for epidemic spreading in
  urban transit systems},
\newblock \bibinfo{journal}{arXiv preprint arXiv:2002.03564}
  (\bibinfo{year}{2020}).

\end{thebibliography}
\end{document}